\newtheorem{theorem}{Theorem}[section] 
\newtheorem{proposition}[theorem]{Proposition} 
\newtheorem{lemma}[theorem]{Lemma}
\newtheorem{corollary}[theorem]{Corollary}
\newtheorem{definition}[theorem]{Definition}
\newtheorem{remark}[theorem]{Remark}
\tikzset{every state/.style={minimum size=0pt}}
\newcommand{\ujourney}[4][]{\ensuremath{#2\mathop{\leadsto}\limits^{#3}_{#1}#4}}
\newcommand{\macro}[2]{\providecommand{#1}{{\ensuremath{#2}}\xspace}}
\macro{\N}{\mathbb N}
\macro{\Z}{\mathbb Z}
\macro{\R}{\mathbb R}
\macro{\Q}{\mathbb Q}
\macro{\T}{\mathbb T}
\macro{\lifetime}{\cal T}
\macro{\G}{\cal G}
\macro{\TVG}{(V,E,\lifetime,\rho,\zeta)}
\macro{\TVGA}{\cal A}
\macro{\JSET}{{\cal J}^*}
\macro{\J}{\cal J}
\macro{\lab}{}
\title{\bf Expressivity of Time-Varying Graphs and \\
the Power of Waiting   in Dynamic Networks}
\author{A. Casteigts$^\dagger$, P. Flocchini$^\dagger$, E. Godard$^\ddagger$, N. Santoro$^\S$, M. Yamashita$^\P$\medskip\\
  \small
  $^\dagger$ University of Ottawa, Canada.
  \vspace{-4pt}\\\footnotesize
  {\tt  \{casteig,flocchin\}@eecs.uottawa.ca} \\
  \small
  $^\ddagger$ Universit\'e de Provence, Marseille, France.
  \vspace{-4pt}\\\footnotesize
  {\tt  egodard@cmi.univ-mrs.fr}\\
  \small
  $^\S$ Carleton University, Ottawa, Canada.
  \vspace{-4pt}\\\footnotesize
  {\tt santoro@scs.carleton.ca}\\
  \small
  $^\P$ Kyushu University, Fukuoka, Japan.
  \vspace{-4pt}\\\footnotesize
  {\tt  mak@csce.kyushu\nobreakdash-u.ac.jp}
}
\date{}
\begin{document}

\maketitle
\begin{abstract}
In infrastructure-less highly  dynamic networks, computing  and
performing even basic tasks (such as routing and broadcasting) is a
very challenging activity due to the fact that connectivity does not
necessarily hold, and the network may actually be disconnected at
every time instant.  
Clearly the task of designing protocols for these networks is less
difficult   if  the environment allows {\em waiting} (i.e., it
provides the nodes  with store-carry-forward-like mechanisms  such as
local buffering) than if waiting is not feasible. 
No quantitative corroborations of this fact exist (e.g., no answer to
the question: {\em how much easier?}). 
In this paper, we consider these qualitative questions about dynamic networks, 
modeled  as  {\em time-varying} (or {\em evolving}) graphs,  where
edges  exist only at some times. 
We examine the difficulty of the environment in terms of the {\em
  expressivity} of the corresponding time-varying graph; that is in
terms of the language generated by the feasible journeys in the
graph. 
 
We prove that the set of languages ${\cal L}_{nowait}$ when no waiting
is allowed contains   all computable  languages. 
On the other end, using algebraic properties of quasi-orders, we prove
that  ${\cal L}_{wait}$ is just the family of {\em regular}
languages. In other words, we prove that, when waiting is no longer
forbidden, the power of the accepting automaton  (difficulty
of the environment) drops drastically from being  
as powerful as a Turing machine, to becoming that of a
Finite-State machine. This (perhaps surprisingly large) gap is a
measure of the computational power of waiting. 

We also study  {\em bounded waiting}; that is when waiting is allowed
at a node only for at most $d$ time units.  We prove the negative
result that ${\cal L}_{wait[d]} = {\cal L}_{nowait}$;  that is, the
expressivity  decreases  only if the  waiting is finite but
unpredictable (i.e., under the control of the protocol designer and
not of the environment). 
 \end{abstract}

\section{Introduction}

\subsection{Highly Dynamic Networks}
%

Computing in static networks (complex or otherwise) is a subject which has been intensively studied from
many point of views (serial/distributed, centralized/decentralized, offline/online, etc.), and it is one
of the central themes of distributed computing.
Computing in {\em dynamic} networks, that is where the structure of the network
changes in time,  is relatively less understood. Extensive research has been devoted to systems where the
 network dynamics  are due to {\em faults} (e.g., node or edge deletions or additions).
 Indeed fault-tolerance is probably the most profound
concern in distributed computing. Faults however are limited in scope, bounded in number,
and are considered anomalies with respect to the correct behaviour of the system.
The study of 
computing in systems where the network faults are actually extensive and possibly unbounded   
is at the core of the field of self-stabilization; the goal of the research is to devise protocols
that, operating in such extreme faulty conditions, are nevertheless able  to provide correct solutions
if the system instability subsides (for long enough time). Also in this case,  faults in the network structure
 are considered anomalies with respect to the correct behaviour of the system.

What about systems where the instability never ends? where the network is never connected?
where changes are unbounded and occur
continuously? where they are not anomalies but integral part of the nature
of the system?

Such highly dynamic systems do exist, are actually quite widespread, and becoming more ubiquitous.
The most obvious  class is that of wireless ad hoc mobile networks: the topology of the communication network, 
formed by having an edge between two entities when they are in communication range, changes continuously
in time as the movement of the entities destroys old connections and creates new ones. These
changes can be dramatic; {connectivity} does not necessarily hold, at least with the usual meaning of {contemporaneous end-to-end multi-hop paths} between any pair of nodes, and 
 the network may actually be disconnected at every time instant. 
These infrastructure-less highly  dynamic networks,
variously called  {\em  delay-tolerant}, {\em disruptive-tolerant},  {\em challenged}, {\em opportunistic},
have been long and extensively investigated   by the engineering community
and, more recently,  by distributed computing researchers,
especially with regards to the problems of broadcast and routing
 ({\it e.g.} \cite{JLW07,LW09a,Zha06}).

 In these networks, the protocol designer has no a priori knowledge nor control over the trajectories of the
 entities. However,  similar highly dynamic conditions occur also when the mobility of the entities 
 follows a predictable pattern, e.g. periodic or cyclic routes, like in the case of
 public transports with fixed timetables, low earth orbiting (LEO) satellite systems, security guards' tours, etc.
(e.g., see   \cite{LW09b,ZKL+07}).
Interestingly, similar  complex dynamics occur also in environments where there is no
mobility at all, e.g., in  
{\em social networks} (e.g. 
\cite{CLP11,KosKW08}).

Note that when dealing with these dynamic networks,  most of the
basic network and graph concepts - such as path, distance, diameter, connected components, etc -
have no meaning without a temporal
context; indeed, all  the usual connectivity concepts have to be extended  to a temporal version
to take into account the realities of the environments being modeled.

\subsection{Journey and Wait}

     From a formal point of view, the highly dynamic features of  these networks and their temporal nature
    are captured by the model  of {\em time-varying}   graphs (or {\em evolving} graphs), where edges between 
nodes exist only at some times, a priori unknown to the algorithm designer
 (e.g., see \cite{BauCF09,CCF09,CFQS11,CleMPS09,Fer04,FMS09,GAS11,IW11,KLO10,KMO11}).

 A crucial aspect of dynamic networks,  and obviously of time-varying graphs,
is that
a path from a node to another might still exist over time,
even though at no time the path exists in its entirety. 
It is this fact that  renders  routing, broadcasting, and thus computing possible 
in spite of the otherwise unsurmountable  difficulties imposed by the nature of those networks.
 Hence, the notion of  ``path over time", formally 
called {\em journey},  is a fundamental  concepts  and plays a central role
in the definition of almost all concepts related to connectivity in time-varying graphs.

Examined extensively, under a variety of names (e.g., temporal path, schedule-conforming path, time-respecting path, trail), informally a journey
is a walk  $<$$e_1, e_2, ..., e_k$$>$  and a sequence of time instants $<$$t_1, t_2, ..., t_k$$>$ where
 edge $e_i$ exists at time $t_i$  and its latency $\zeta_i$ at that time is such that
$t_{i+1} \geq  t_i + \zeta_i$.

While the concept of journey captures the notion of ``path over time" so crucial in dynamical systems,
it does not yet capture additional  limitations that some of these environments can impose on
the use of the journeys during a computation.
More specifically, there are systems that provide  the entities with store-carry-forward-like mechanisms
(e.g., local buffering); thus an entity wanting to communicate with a specific other entity at time $t_0$,
can wait until the opportunity of communication presents itself. 
There are however environments where such
a provision is not available (e.g., there are no buffering facilities), and thus waiting is not allowed.
  In time-varying graphs, this distinction is the one between a {\em direct}  journey where 
  $\forall i,  t_{i+1} =  t_i + \zeta_i$,
   and  an
  {\em indirect} journey where  $\exists i,  t_{i+1} >  t_i + \zeta_i$.

With regards to
problem solving,
 any restriction, imposed  by the nature of the
system on the protocol designer,
has clearly an impact on the computability and complexity
of problems. In dynamic networks,
computing (already a difficult task) is intuitively more
difficult  in  environments  that do not allow waiting 
than in
those   where waiting is possible; that is,  
environments where the only feasible  journeys are the direct ones 
are clearly more challenging (for the problem solver) than those where
journeys can be indirect.

In the common view of 
the environment as the adversary 
that the problem solver has to face, an environment that forbids waiting
is clearly a more difficult (i.e. stronger) adversary than the one that allows waiting.
The natural and immediate question is {\em ``how much stronger is the adversary if
waiting is not allowed?"}
 which can be re-expressed as: {\em ``if waiting is allowed, how much easier is to solve problems?"}, or simply {\em ``what is the computational power of waiting?"}


A first difficulty in addressing these important  questions is 
that most of the terms are qualitative, and  currently  there are no 
 measures 
that allow to  quantify  even the main  concepts  e.g. 
``complexity" of the environment,  ``strength" of the adversary,
``difficulty" of solving problems, etc.

 In this paper,  motivated by these qualitative questions, we 
 examine the complexity of  the environment (modeled as a time-varying graph)
 in terms of its
{\em expressivity},
 and establish results showing the  
 (surprisingly dramatic)  difference  that the possibility of waiting creates.

\subsection{Contributions}

Given a dynamic network modeled as a time-varying graph $\G$, 
a journey  in $\G$ can be viewed as a word on the alphabet of the edge labels;
in this light,
the class of feasible journeys defines the language $L_f(\G)$ expressed by $\G$,
 where $f\in\{wait, nowait\}$ indicates whether or not  indirect journeys are considered
  feasible by the environment.

  We focus on the sets of languages  ${\cal L}_{nowait} =\{L_{nowait} (\G) : \G \in {\cal U}\}$
   and     ${\cal L}_{wait} =\{L_{wait} (\G) : \G \in {\cal U}\}$, where $\cal U$ is the set of all time-varying graphs; that is, we look at the languages expressed when waiting is, or is not allowed.
For each of these two sets,
the complexity of  recognizing any language in the set (that is,  the computational power
needed by the accepting automaton)  defines  the level of
difficulty of the environment.

We first study the expressivity of  time-varying graphs when waiting is not allowed,
that is the only feasible journeys are direct ones. We prove that 
the set  ${\cal L}_{nowait}$ contains  all  computable  languages.
That is, we show that,  for any computable  language $L$,
 there exists a time-varying graph $\G$ such that $L= L_{nowait}(\G)$. 

We next examine  the expressivity of  time-varying graphs if indirect journey are allowed.
We prove that  ${\cal L}_{wait}$ is precisely the set of {\em regular} languages.
 The proof is algebraic and based on order techniques, relying on a theorem by 
 Harju and Ilie~\cite{harju}  that enables to characterize regularity from the closure of 
 the sets from a well quasi-order.
In other words,  we prove that, when waiting is no longer forbidden, the power of the accepting automaton  
 (i.e., the difficulty of the environment, the power of the adversary), drops drastically from being
 as powerful as a Turing machine, 
 to becoming that of a Finite-State Machine. This 
 (perhaps surprisingly large) gap is a measure of
the computational power of waiting.

To better understand the power of waiting, 
we then turn our attention to  {\em bounded waiting}; that is when 
indirect journeys are considered feasible if the pause 
between consecutive edges in the journeys have a bounded duration $d>0$.
In other words, at each step of the journey, waiting is allowed only for at most $d$ time units. We 
examine  the set  ${\cal L}_{wait[d]}$ of  the languages expressed by time-varying graphs
when waiting is allowed up to  $d$ time units. We prove the negative result that for any
fixed $d\geq 0$,
${\cal L}_{wait[d]} = {\cal L}_{nowait}$, which implies that the complexity of
the environment is not affected by allowing waiting for a limited amount of time.
As a result, the power of the adversary is decreased only if it has no control over the
length of waiting, i.e., if the waiting is unpredictable. 
%

\subsection{Related Work}

The literature on dynamic networks and dynamic graphs could fill volumes. Here  we briefly
mention only some of the work most directly connected to the results of this paper.
 The idea of representing dynamic graphs as a sequence of (static) graphs, called
  {\em evolving graph}, 
  was introduced in~\cite{Fer04}, 
  to study basic network problems 
 in dynamic networks from
 a centralized point of view \cite{BhF03,BFJ03}.
 The evolving graph views the dynamics of the system as a sequence
of {\em global} snapshots (taken either in discrete steps or when events occur).
 The  equivalent model  of {\em time-varying graph} (TVG), 
 formalized in \cite{CFQS11} and used here, 
 views the dynamics of the system from the {\em local} point of view of the entities:
for any given entity, the local edges and neighborhood can be considered independently from the entire graph (e.g. how long it is available, with what properties, with what latency, etc.).

Both viewpoints
 have been extensively employed in the analysis of 
 basic 
 problems  such as routing, broadcasting, gossiping and other forms of information spreading
 (e.g.,   \cite{AKL08,CFMS10,CFMS11,CoN10,OW05});
  to study problems of exploration in vehicular networks with periodic routes  \cite{FMS09,IW11};
to examine failure detectors \cite{GAS11} and consensus \cite{KLO10,KMO11};
 for the probabilistic analysis of informations spreading
(e.g.,   \cite{BauCF09,CleMMPS08}); 
and 
    in the investigations of emerging properties in
 social networks (e.g., \cite{KeK02,TSM+09}).
 A characterization of classes of TVGs with respect to properties typically assumed in
the research can be found in \cite{CFQS11}.
The related investigations on dynamic networks include  also the extensive work on {\em population protocols}
(e.g., \cite{AAER07, CMS09});
interestingly, the setting over which population protocols are
defined is a particular class of time-varying graphs (recurrent interactions over a connected underlying graph).
The impact of bounded waiting in dynamic networks has been investigated 
for exploration~\cite{IW11}.

\section{Definitions and Terminology}

 \paragraph{\bf Time-varying graphs:}
A {\em time-varying graph}\footnote{We use the notation for time-varying graphs introduced in \cite{CFQS11}}
  \G  is a quintuple  \G = \TVG, where
 $V$ is a finite set of entities  or {\em nodes};  $E \subseteq V \times V \times \Sigma$ is a finite set of relations between these entities ({\em edges}), possibly labeled by symbols in an alphabet $\Sigma$. The system is studied over a given time span $\lifetime \subseteq \T$ called {\em lifetime}, where $\T$ is the temporal domain (typically,  $\N$ or $ \R^+$ for discrete and continuous-time systems, respectively); $\rho: E \times \lifetime \to \{0,1\}$ is the {\em presence} function, which indicates whether a given edge is available at a given time;
$\zeta: E \times \lifetime \to \T$, is the {\em latency} function, which indicates the time it takes to cross a given edge if starting at a given date (the latency of an edge could vary in time).
Both presence and latency are arbitrary {\em computable} functions.
The directed edge-labeled graph $G=(V,E)$,  called the {\em footprint} of \G,  may contain loops, and
it may have more than one edge between the same nodes, but all with different labels.

A path over time, or {\em journey},
is a sequence     $<(e_1,t_1), (e_2,t_2) , ..., (e_k, t_k)>$   where  $<e_1, e_2, ..., e_k>$
is a walk in the footprint G, $\rho(e_i,t_i)=1$ (for $1\leq i <k$),  and  $\zeta(e_i,t_i)$  is such that
$t_{i+1} \geq  t_i + \zeta(e_i,t_i)$ (for $1\leq i <k$). 
If   $\forall i,  t_{i+1} =  t_i + \zeta(e_i,t_i)$ the journey is said to be  {\em direct}, 
  {\em indirect} otherwise.  We denote by $\J^*(\G)$ the set of all
journeys in \G.

The time-varying-graph (TVG)  formalism can arguably describe a multitude of different scenarios, from transportation networks to communication networks, complex systems, or social networks \cite{CFQS11}. 
Figure~\ref{fig:examples} shows two simple examples of TVGs, depicting respectively a transportation network (Figure~\ref{fig:example1}) and a communication network (Figure~\ref{fig:example2}). In the transportation network, an edge from node $u$ to node $v$ represents the possibility for some agent to move from $u$ to $v$; typical edges in this scenario are available on a {\em punctual} basis, {\it i.e.,} the presence function $\rho$ for these edges returns $1$ only at particular date(s) when the trip can be started. The latency function $\zeta$ may also vary from one edge to another, as well as for different availability dates of a same given edge (e.g. variable traffic on the road, depending on the departure time). In the communication network, the labels are not indicated; shown instead are the intervals of time when the presence function $\rho$ is 1. An example of direct journey in this graph is $\J_1=\{(ab,2),(bc,2+\zeta)\}$. Examples of indirect ones
  include $\J_{2}=\{(ac,2),(cd,5)\}$, and  $\J_{3}=\{(ab,2),(bc,2+\zeta),(cd,5)\}$. 


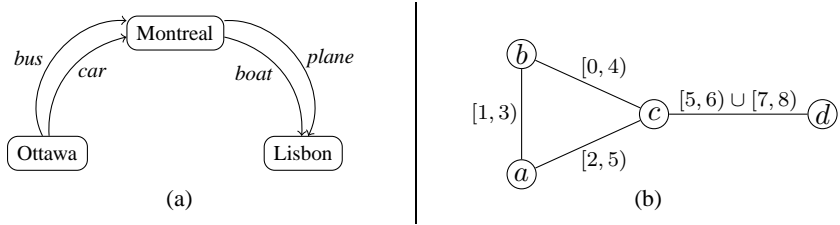
\begin{figure}[h]
  \centering
  \begin{tabular}{c|c}
    \subfigure[]{
      \label{fig:example1}
      \quad
      \begin{tikzpicture}[yscale=1.6, xscale=1.7]
        \tikzstyle{every node}=[draw, rectangle, rounded corners=4pt, font=\scriptsize]
        \path (0,0) node (o) {Ottawa};
        \path (1,1) node (m) {Montreal};
        \path (2,0) node (l) {Lisbon};
        \tikzstyle{every path}=[->]
        \tikzstyle{every node}=[font=\scriptsize]
        \draw (o) edge[bend left=60] node[left] {{\it bus}} (m);
        \draw (o) edge[bend left=40] node[right] {\it car} (m);
        \draw (m) edge[bend left=40] node[left] {{\it boat}} (l);
        \draw (m) edge[bend left=60] node[right] {\it plane} (l);
        \path (o)+(0,-.3) coordinate;
      \end{tikzpicture}
      \quad
    }
    &
    \subfigure[]{
      \label{fig:example2}
      \quad
  \begin{tikzpicture}[scale=1.6]
    \tikzstyle{every node}=[draw, circle, minimum size=11pt, inner sep=0pt]
    \path (0,0) node (a){$a$};
    \path (a)+(0, 1) node (b){$b$};
    \path (a)+(1.1,.5) node (c){$c$};
    \path (c)+(1.4,0) node (d){$d$};
    \tikzstyle{every node}=[font=\scriptsize, inner sep=1pt]
    \draw (a)--node[midway, left]{$[1,3)$}(b);
    \draw (a)--node[midway, xshift=-4pt, yshift=-1pt, below right]{$[2,5)$}(c);
    \draw (b)--node[midway, xshift=-4pt, yshift=1pt, above right]{$[0,4)$}(c);
    \draw (c)--node[midway, above]{$[5,6)\cup [7,8)$}(d);
  \end{tikzpicture}
    \quad
    }
  \end{tabular}
  \caption{\label{fig:examples} \footnotesize Two examples of time-varying graphs, highlighting 
 (a)  the labels, and (b) the presence function.}
\end{figure}

 \paragraph{\bf TVG-automata:}
Given a time-varying graph $\G = \TVG$ whose edges are labeled over  $\Sigma$, 
we define a TVG-automaton
 $\TVGA({\G})$ as the 5-tuple $\TVGA({\G})=(\Sigma, S, I, {\cal E}, F)$ where
$\Sigma$ is the input {\em alphabet};
$S=V$ is the set of {\em states};
$I \subseteq S$ is the set of {\em initial states}; $F \subseteq S$ is the set of  {\em accepting
states};
${\cal E}\subseteq S \times \lifetime \times \Sigma \times S \times \lifetime$ is the set of {\em transitions} such that $(s,t,a,s',t') \in {\cal E}$
 iff $\exists e=(s,s',a)\in E : \rho(e,t)=1,\zeta(e,t)=t'-t$. In the following we
 shall denote  $(s,t,a,s',t') \in {\cal E}$ also by $s,t \overset{a}{\to} s',t'$.
A  TVG-automaton $\TVGA({\G})$ is {\em deterministic} if 
 for any time $t\in\lifetime$, any   state $s\in S$, any symbol $a\in\Sigma$, there is
 at most one transition of the form $(s, t \overset{a}{\to} s',t')$; it is {\em non-deterministic} otherwise.

Given a TVG-automaton $\TVGA({\G})$, a {\em journey  in} $\TVGA({\G})$ is a finite sequence of transitions 
${\cal J}=(s_{0},t_{0} \overset{a_0}{\to} s_{1},t_{1}),(s_{1},t_{1}' \overset{a_1}{\to} s_{2},t_{2}) \dots (s_{p-1},t_{p-1}' \overset{a_{p-1}}{\to} s_{p},t_{p})$ such that
the sequence $<$$(e_0,t_{0}), (e_1,t'_{1}),...,(e_{p-1},t'_{p-1})$$>$
is a journey in  \G  and
$t_p = t'_{p-1}+\zeta(e_{p-1},  t'_{p-1})$, where 
$e_i=(s_i,s_{i+1},a_i)$ (for $0\leq i < p$).
Consistently with the above definitions, we say that ${\cal J}$ is {\em direct} if $\forall i, t_i'=t_i$ (there is no pause between transitions), and {\em indirect} otherwise ({\it i.e.,} $\exists i : t_i'>t_i$). 
We denote by $\lambda(\J)$ the associated word $a_0,a_1,...a_{p-1}$ and by
$start(\J)$ and $arrival(\J)$ the dates $t_0$ and $t_p$, respectively.
To complete the definition, an  {\em empty} journey $\J_{\emptyset}$ consists of
 a single state, involves no transitions,   its associated word is the empty 
 word $\lambda(\J_{\emptyset})=\varepsilon$, and its arrival date is the starting date.

A journey is said {\em accepting} iff it starts in an initial state $s_0 \in I$ and ends in a accepting state $s_p\in F$.
 A TVG-automaton $\TVGA({\G})$ {\em accepts} a word $w\in\Sigma^*$ iff there exists an accepting journey  ${\cal J}$ such that $\lambda(\J)=w$. 

 Let $L_{nowait}(\G)$ denote
 the set of words (i.e., the {\em language}) accepted by TVG-automaton $\TVGA({\G})$  using only direct journeys, and let
 $L_{wait}(\G)$ be the language recognized if journeys are allowed to be indirect. 
 Given the set $\cal U$ of all possible TVGs, let us denote   ${\cal L}_{nowait} =\{L_{nowait} (\G) : \G \in {\cal U}\}$
   and     ${\cal L}_{wait} =\{L_{wait} (\G) : \G \in {\cal U}\}$ the sets of all languages being possibly accepted by a TVG-automaton  if journeys are constrained to be direct (i.e., no waiting is allowed)
    and if they are unconstrained (i.e., waiting is allowed), respectively. 

In the following, when no ambiguity arises, we will use  interchangeably
the terms node and state, and the terms edge and transition; the term journey 
will be used both in reference to the sequence of edges in the TVG and to  the corresponding
sequence of transitions in the associated TVG-automaton.

The closest concept to  TVG-automata are  {\em Timed automata} proposed by \cite{TA} to model real-time systems. 
A timed automaton has real valued clocks and the transitions are guarded with simple
comparisons on the clock values; 
 with only one clock 
and no reset it is a TVG-automaton with 0 latency. 

\paragraph{\bf Example of TVG-automaton:}
Figure~\ref{fig:anbn}  shows an example   of a deterministic TVG-automaton that recognizes the context-free language $a^n b^n$ for $n \ge 1$ (using only direct journeys). 
Consider the graph $\G_1=\TVG$,  composed of three nodes:
 $V=\{v_0,v_1,v_2\}$,
and five  edges: 
$E=\{(v_0,v_0,a)$, $(v_0,v_1,b)$, $(v_1,v_1,b)$, $(v_1,v_2,b)$,  $(v_0,v_2,b))\}$. 
 The presence and latency functions are as shown  in Table \ref{functions},
 where $p$ and $q$ are two distinct  prime numbers greater than 1.
 Consider now the corresponding automaton $\TVGA({\G_1})$ 
 where  $v_0$ is the initial state and $v_2$ is the accepting state. For clarity, let us assume that $\TVGA(\G_1)$ starts at time $1$ (the same behavior could be obtained by modifying slightly the formulas involving $t$ in Table~\ref{functions}).
It is clear that the $a^n$ portion of the word $a^n b^n$     is read entirely at $v_0$ within $t=p^n$ time. 
If $n=1$, at this time  the only available edge is $e_3$ (labeled {\tt b}) which allows to correctly accept $ab$. Otherwise ($n>1$)  at   time $t=p^n$, the 
only available edge is $e_1$ which allows to start reading the $b^n$ portion of the word. 
By construction of $\rho$ and $\zeta$, edge $e_2$ is always present except for the very last $b$, which has to be 
read at time $t=p^nq^{n-1}$.  At that  time, only $e_4$ is present and the word is correctly recognized. It is easy to verify that only these words are recognized, and the automaton is deterministic.
The reader may have noticed the basic principle employed here (and  later in the paper)  of using
latencies as a means to {\em encode} words into time, and presences as a means to {\em select} through opening
the appropriate edges at the appropriate time.

\begin{table}[tbh]
\begin{center}
\begin{tabular}{|c|c|c|c|c|}
\hline
$e$ && $\rho(e,t) =1$ iff &&$\zeta(e,t)\ =$\\
\hline
&& &&\\
\hline
$e_0$ &&  always true & &  $(p-1)t$ \\
\hline
$e_1$ &&       $t>p$ & & $(q-1)t$ \\
\hline
$e_2$&&       $t\ne p^iq^{i-1}$,$i>1$ && $(q-1)t$ \\
\hline
$e_3$&&       $t=p$ && any \\
\hline
$e_4$&&      $t=p^iq^{i-1}$, $i>1$ &&any   \\
\hline
\end{tabular}
\caption{Presence and Latency functions for TVG-automaton $\G_1$.}
\label{functions}
 \end{center}
\end{table}

\begin{figure}[h]
 \centering
 \begin{tikzpicture}[shorten >=1pt,node distance=2cm,auto, font=\footnotesize]
   \node[state,initial] (q0) {$v_0$};
   \node[state](q1) [right of= q0] {$v_1$};
   \node[state,accepting](q2) [below right of= q1] {$v_2$};
   
   \path[->] (q0) edge [loop above] node[left,yshift=-6pt] {$e_0$~} node {\tt a} ()
   edge node[below]{$e_1$} node {{\tt b}} (q1)
   (q1) edge [loop above] node[left,yshift=-6pt] {$e_2$~} node {{\tt b}} ()
   edge node[below]{$e_4$} node[inner sep=2pt] {{\tt b}} (q2);
   \path[->] (q0) edge[bend right] node[below]{$e_3$} node[inner sep=2pt] {{\tt b}} (q2);
 \end{tikzpicture}
 \caption{\label{fig:anbn}A TVG-automaton $\G_1$ such that $L_{nowait}(\G_1)=\{a^nb^n : n\ge 1\}$.}
\end{figure}



%

\section{No waiting allowed}
\label{sec:no-waiting}

This section focuses on the expressivity of time-varying graphs
 when only {\em direct} journeys are allowed.
 We prove that 
      ${\cal L}_{nowait}$ includes  all  computable languages.


%

Let $L$ be an arbitrary  computable language defined over a finite alphabet 
$\Sigma$. Let $\varepsilon$ denote the empty word; note that $L$ might or might not contain
$\varepsilon$.
The notation $\alpha.\beta$ indicates the concatenation of $\alpha\in
\Sigma^*$ with $\beta\in\Sigma^*$.

Let $q=|\Sigma|$ be the size of the alphabet, and
w.l.o.g  assume that $\Sigma=\{0, \dots, q-1\}$. 
We define an injective encoding $\varphi: \Sigma^*{\to} \N $  associating to each 
word $w=a_0.a_1\dots a_k \in\Sigma^*$  the sum
$q^{k+1}+\mathop\sum_{j=0}^{k} a_j q^{k-j}$. 
It is exactly the integer corresponding to $1.w$ interpreted in
base $q$.
By convention, $\varphi(\varepsilon) = 0$.
 
Consider now  the TVG
 $\G_2$ where $V=\{v_0,v_1,v_2\}$,
$E=\{\{ (v_0,v_1,i),i\in\Sigma\} \cup \{\{ (v_0,v_2,i),i\in\Sigma\}, \cup \{(v_1,v_1,i),i\in\Sigma\}\cup\{(v_1,v_2,i),i\in\Sigma\}\cup\{(v_2,v_1,i),i\in\Sigma\}\cup\{(v_2,v_2,i),i\in\Sigma\}\}$.
The   presence and latency functions 
are defined relative to which node is 
the end-point of an edge.  
For all $u\in\{v_0,v_1,v_2\}$,  $i\in\Sigma$, and $t\geq 0$, we  define 
\begin{itemize}
\item  $\rho((u,v_1,i),t) = 1$ iff $t\in\varphi(\Sigma^*)$ and $\varphi^{-1}(t).i\in L,$
\item  $\zeta((u,v_1,i),t) = \varphi(\varphi^{-1}(t).i)-t$

  \item $\rho((u,v_2,i),t) = 1$ iff $t\in\varphi(\Sigma^*)$ and $\varphi^{-1}(t).i\notin L,$
  \item $\zeta((u,v_2,i),t) = \varphi(\varphi^{-1}(t).i)-t$
\end{itemize}

Consider the corresponding TVG-automaton
$ \TVGA(\G_2)$ where
 $v_0$ is the initial state, 
and  $v_1$ is the unique  accepting state  if the $\varepsilon\notin L$ (see Figure~\ref{fig:nowaiting}), while  
both    $v_0$ and $v_1$ are  accepting states if $\varepsilon \in L$.

\begin{figure}[h]
  \centering
   \begin{tikzpicture}
     \tikzstyle{every node}=[node distance=3cm,minimum size=0pt];
     \node[initial, state] (v0) {$v_0$};
     \node[accepting, state, above right of=v0,node distance=2.2cm] (v1) {$v_1$};
     \node[state, right of=v1] (v2) {$v_2$};
     \tikzstyle{every node}=[above, font=\scriptsize, inner sep=1pt]
     \path (v1) edge[loop below, distance=1.4cm, in=128, out=52, ->] node {\tt q-1} (v1);
     \path (v1) edge[loop above, distance=.4cm, in=110, out=70, ->] node (v1v1){\tt 0} (v1);
     \path (v2) edge[loop below, distance=1.4cm, in=128, out=52, ->] node {\tt q-1} (v2);
     \path (v2) edge[loop above, distance=.4cm, in=110, out=70, ->] node (v2v2){\tt 0} (v2);
     \path (v0) edge[bend left=14,->] node[above left] {\tt q-1} (v1);
     \path (v0) edge[bend right=14,->] coordinate[above] (v0v1) node[below right] {\tt 0} (v1);
     \path (v0) edge[bend right=30,->] node[above left] {\tt q-1} (v2);
     \path (v0) edge[bend right=44,->] coordinate[above] (v0v2) node[below right] {\tt 0} (v2);
     \path (v1) edge[bend right=15,->] node[above] {\tt q-1} (v2);
     \path (v1) edge[bend right=36,->] node[above] (v1v2) {\tt 0} (v2);
     \path (v1) edge[bend left=38,<-] node[above,inner sep=1pt] {\tt q-1} (v2);
     \path (v1) edge[bend left=15,<-] node[above,inner sep=1pt] (v2v1) {\tt 0} (v2);
     \tikzstyle{every node}=[font=\scriptsize]
     \path (v0v1) node[above, xshift=-2.4pt] {$\dots$};
     \path (v0v2) node[above=.2pt, xshift=-5pt] {~$\dots$};
     \path (v1v2) node[above=.2pt] {~$\dots$};
     \path (v2v1) node[above=.5pt] {~$\dots$};
     \path (v1v1) node[above=2pt] {~$\dots$};
     \path (v2v2) node[above=2pt] {~$\dots$};
   \end{tikzpicture}
 \caption{\label{fig:nowaiting}A TVG $\G_2$ that recognizes an arbitrary language $L$ (case $\varepsilon \notin L$).}
\end{figure}
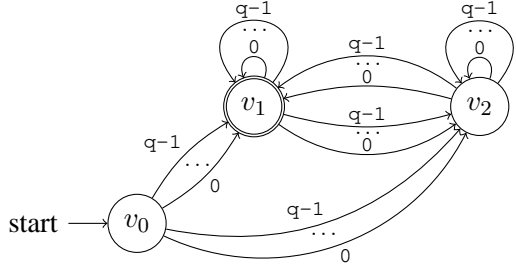

\begin{theorem}
  $L_{nowait}(\G_2)=L$. 
\end{theorem}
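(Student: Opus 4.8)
The plan is to prove both inclusions at once by establishing a single \emph{time-encodes-the-prefix} invariant for direct journeys, exploiting the fact that the latency functions are rigged so that reading a symbol $i$ advances the clock from $\varphi(w')$ to $\varphi(w'.i)$. First I would fix the convention, consistent with the earlier initialization of $\G_1$ at time $1$, that every journey of $\TVGA(\G_2)$ starts at time $t_0=\varphi(\varepsilon)=0$. The central lemma, proved by induction on the length $j$ of the prefix already read, is: any direct journey from $v_0$ starting at time $0$ that has read a prefix $w'=a_0\cdots a_{j-1}$ is at that moment located at time exactly $\varphi(w')$, and is in state $v_1$ if $w'\in L$ and in state $v_2$ if $w'\notin L$ (the base case $j=0$ being time $0$ in state $v_0$).

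For the inductive step I would argue as follows. Suppose the journey has read $w'$ and sits at time $t=\varphi(w')\in\varphi(\Sigma^*)$ in some state $s$. From every state there is, for each label $a$, exactly one edge toward $v_1$ and exactly one toward $v_2$ (loops at $v_1,v_2$ together with the cross edges cover this), and their presence depends only on the target: the edge toward $v_1$ is present iff $\varphi^{-1}(t).a=w'.a\in L$, and the one toward $v_2$ iff $w'.a\notin L$. Exactly one of these holds, so the next symbol can always be read and in a unique way, and the common latency $\varphi(w'.a)-\varphi(w')$ carries the journey to time $\varphi(w'.a)$ in the state dictated by whether $w'.a\in L$; this re-establishes the invariant, shows the read never gets stuck, and shows the journey reading a given word is unique (determinism). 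One small point to check here is that $\zeta\ge 0$: indeed $\varphi(w'.a)=q\,\varphi(w')+a>\varphi(w')$ for nonempty $w'$, while for the very first step the convention $\varphi(\varepsilon)=0$ makes $\zeta=\varphi(a)=q+a>0$ directly from the definition, even though it breaks the recursion $\varphi(w'.a)=q\,\varphi(w')+a$ at that one step.

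With the invariant in hand, both inclusions follow immediately. For a nonempty word $w$, the unique direct journey reading $w$ from $v_0$ ends in $v_1$ (an accepting state) iff $w\in L$, and in $v_2$ (non-accepting) otherwise, so $w$ is accepted iff $w\in L$. For $\varepsilon$, the only journey with empty label is the empty journey, which rests at $v_0$; by the case distinction in the construction $v_0\in F$ iff $\varepsilon\in L$, so $\varepsilon$ is accepted iff $\varepsilon\in L$. Since no edge targets $v_0$, declaring $v_0$ accepting in the case $\varepsilon\in L$ cannot affect any nonempty word, and combining the two cases yields $L_{nowait}(\G_2)=L$. I would also note in passing that $\G_2$ is a legitimate TVG: because $L$ is computable and $\varphi,\varphi^{-1}$ are computable, both $\rho$ and $\zeta$ are computable functions, as the model requires.

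The main obstacle, and the point most worth stating carefully, is the start-time convention rather than any calculation. The construction is correct only because journeys are initialized at $t_0=0=\varphi(\varepsilon)$: were a journey allowed to start at an arbitrary $t_0=\varphi(w_0)\in\varphi(\Sigma^*)$ with $w_0\neq\varepsilon$, the same mechanism would test membership of $w_0.w$ instead of $w$, spuriously accepting a word $w\notin L$ whenever $w_0.w\in L$. Making explicit that the automaton starts at time $0$ (mirroring the initialization of $\G_1$ at time $1$) is therefore the load-bearing hypothesis; everything else is the routine induction sketched above.
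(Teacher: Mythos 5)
Your proof is correct and follows essentially the same route as the paper's: an induction on the length of the prefix read, using the facts that edge presence depends only on the target node and that exactly one of $w'.a\in L$, $w'.a\notin L$ holds, to conclude that every word labels a unique direct journey, arriving at time $\varphi(w'.a)$ in $v_1$ or $v_2$ according to membership in $L$, with the empty word handled separately at $v_0$. Your extra observations --- that the latencies are nonnegative (despite the convention $\varphi(\varepsilon)=0$ breaking the base-$q$ recursion at the first step), and that the argument is sound only under the convention that journeys start at time $0=\varphi(\varepsilon)$, which the paper leaves implicit (mirroring its explicit initialization of $\G_1$ at time $1$) --- are welcome refinements of rigor rather than a different argument.
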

\begin{proof}


 We want to show  there is a unique accepting journey $\J $ with  $\lambda(\J)=w$ iff $w \in L$.
   
\noindent  We first show that for all words $w \in \Sigma^*$, there is
exactly one direct journey $\J$ in $\TVGA(\G_2)$ such that
$\lambda(\J)=w$, and in this case $arrival(\J)=\varphi(w)$. 
  This is proven by induction on $k\in\N$, the length of the words. 
It clearly holds for $k=0$ since the only word of that length is $\varepsilon$ and  $\varphi(\varepsilon)=0$.  
   Let $k\in\N$. Suppose now that for all 
  $w \in\Sigma^*, |w|=k$ we have  exactly one associated direct journey, and
  $arrival(\J)=\varphi(w)$. 

 \noindent    
 Consider $w_1\in\Sigma^*$ with $|w_1|=k+1$. Without loss of generality, let  $w_1=w.i$
 where $w\in\Sigma^*$
  and $i\in\Sigma$.
  By induction there is exactly one direct journey $\J $ with  $\lambda(\J)=w$. 
  Let $u= arrival(\J)$ be the node of arrival and $t$ the arrival time.
    By induction, $t\in\varphi(\Sigma^*)$; furthermore
  since the presence function  depends only on the node of arrival and not  on the node of origin,
  there exists  exactly one transition,
  labeled $i$ from $u$.
 So there exists only one direct journey labeled by $w_1$.
By definition of the latency function, its arrival time is  
$\varphi(\varphi^{-1}(t).i) = \varphi(w.i) = \varphi(w_1)$.
  This ends the induction.

\noindent
We now show that such a unique  journey is accepting iff $w\in L$.
In fact,  by construction of the presence function, every 
journey that corresponds to  $w\in L, w\neq \varepsilon, $ 
ends in  $v_1$, which is an accepting state. The empty journey
corresponding to  $\varepsilon$ ends in $v_0$ which, by definition, is
accepting iff $\varepsilon\in L$.   
\end{proof}

%



\section{Waiting allowed}
\label{sec:waiting}

We now turn the attention to the case of time-varying graphs where
{\em indirect} journeys are possible, that is entities have the choice
to wait for future opportunities of interaction rather than seizing
only those that are directly available.
In striking contrast with the non-waiting case, we show that the
languages ${\cal L}_{wait}$ recognized by TVG-automata consists only
of regular languages.  
Let ${\cal R}$ denote the set of regular languages. 
We show that

\begin{theorem}
\label{theo:wait}
${\cal L}_{wait}  = {\cal R}$.     
\end{theorem}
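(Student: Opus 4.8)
The plan is to prove the two inclusions separately. For $\mathcal{R} \subseteq \mathcal{L}_{wait}$, I would start with an arbitrary regular language $L$ and a DFA recognizing it, then construct a TVG $\G$ whose footprint mimics the DFA's transition graph: one node per state, and for each DFA transition $s \xrightarrow{a} s'$ an edge labeled $a$ from $s$ to $s'$. The key is to choose the presence and latency functions so that \emph{indirect} journeys can traverse edges in exactly the order prescribed by the DFA regardless of timing. Because waiting is now permitted, the adversary-of-time constraint that was exploited in the no-wait construction evaporates: I can make every edge permanently present (say $\rho \equiv 1$) with unit latency, so that any sequence of edges forming a walk in the footprint is realizable as an indirect journey. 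Then $L_{wait}(\G)$ is precisely the set of words labeling walks from the initial state to an accepting state, which is exactly $L$. This direction should be routine once the timing is rendered inert.

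The hard and interesting direction is $\mathcal{L}_{wait} \subseteq \mathcal{R}$: I must show that for \emph{every} TVG $\G$, the language $L_{wait}(\G)$ is regular, even though presence and latency are arbitrary computable functions and the state/time space is infinite (a ``configuration'' is a pair (node, time) ranging over $V \times \lifetime$). The plan here is to invoke the algebraic characterization of regularity via well quasi-orders, following Harju and Ilie as the excerpt advertises. The strategy is to define a suitable quasi-order $\preceq$ on $\Sigma^*$ (or on the configurations/journey-endpoints reachable in the automaton) such that (i) $\preceq$ is a \emph{well} quasi-order, and (ii) both $L_{wait}(\G)$ and its complement are \emph{upward-closed} (or can be written as closures) under $\preceq$. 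By the Harju--Ilie theorem, a language that is a union of equivalence classes of a monotone well quasi-order of finite index, or one both whose members and non-members are closed under a wqo, is regular; so establishing these order-theoretic closure properties yields regularity directly.

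The crux is to extract from the \emph{waiting} capability a monotonicity property linking words to reachable configurations. The intended idea is: because waiting is allowed, if from some initial state a word $w$ can drive a journey to node $u$ arriving at time $t$, then from $u$ at any \emph{later} admissible time the same future behavior is available --- waiting lets the automaton ``park'' at a node and pick up any future edge. This should let me define, for each pair $(u, v)$ of nodes, the set of words that carry a journey from $u$ to $v$ while respecting presence; the waiting freedom makes these sets behave coherently under a prefix/subword-style order. I would formalize a relation on the set of (node, reachable-time-set) pairs and argue it is a wqo because $V$ is finite and the time component, once quotiented by the ``any later time works'' equivalence induced by waiting, collapses to finitely many meaningful classes. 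Then $L_{wait}(\G)$ is saturated by this wqo and Harju--Ilie applies.

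The step I expect to be the main obstacle is precisely showing that waiting collapses the infinite time dimension into finitely many equivalence classes, i.e. that the wqo really is \emph{well}. The danger is that arbitrary computable presence functions could in principle encode arbitrarily fine timing distinctions that survive the waiting quotient; I will need to argue that whatever distinctions remain are \emph{not observable through the labeled language}, only through timing, and that the set of words reachable from each node forms a monotone family to which the wqo closure theorem applies. Pinning down the exact order so that both $L_{wait}(\G)$ and its complement are closed --- rather than just one of them --- is the delicate point, since regularity via Harju--Ilie requires the two-sided closure (or a finite-index congruence), and this is where the waiting hypothesis must be used in full force rather than merely to realize walks.
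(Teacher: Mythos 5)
Your first inclusion ($\mathcal{R}\subseteq\mathcal{L}_{wait}$) is fine and essentially the paper's: embed a DFA as a TVG with ever-present edges and inert timing. The gap is in the hard direction. You name the right tool (Harju--Ilie), but your concrete plan diverges from any workable route in two places. First, your path to well-quasi-orderedness --- quotienting configurations $(u,t)$ by an ``any later time works'' equivalence and arguing that only finitely many classes survive --- is both unsound and circular. Unsound, because with arbitrary computable presence functions waiting yields only an \emph{inclusion}: whatever is realizable from $(u,t')$ with $t'>t$ is realizable from $(u,t)$ by parking, but not conversely, so configurations at different times are not equivalent, and the resulting decreasing chain of residual languages has no reason to stabilize. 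Circular, because ``finitely many meaningful classes of configurations'' is exactly Myhill--Nerode finiteness of residuals, i.e., the regularity you are trying to prove. The paper's quasi-order lives on \emph{words}, not on configurations: $x\prec y$ iff every journey labeled $y$ contains, as a subwalk with the same two endpoints, a walk labeled $x$. Its wellness has nothing to do with collapsing the time axis: it comes from Higman's theorem on the subword order (lifted to labeled walks), combined with a Nash-Williams-style minimal-antichain argument, with monotonicity and well-foundedness proved separately.

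Second, your claim that Harju--Ilie ``requires the two-sided closure (or a finite-index congruence)'' misstates the theorem actually needed and sends the effort to the wrong place. The version the paper invokes says: for \emph{any} monotone well quasi-order $\leq$ on $\Sigma^*$ and \emph{any} $L\subseteq\Sigma^*$, the set $\textsc{Down}_\leq(L)$ is regular. So one only needs (i) $\prec$ is a monotone wqo, and (ii) $L_{wait}(\mathcal{G})$ equals its own down-set under $\prec$. Point (ii) is precisely where waiting is used in full force: given an accepting journey labeled $y$ and $x\prec y$, follow the guaranteed subwalk labeled $x$, waiting at intermediate nodes while the skipped transitions of the $y$-journey elapse, and arrive at the same accepting node at the same time. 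You gesture at this (``parking''), but you never isolate it as the one closure property required; instead you anticipate having to handle the complement of $L_{wait}(\mathcal{G})$ as well, which is unnecessary. As it stands, the proposal is missing the key order on words, the key walk-to-journey closure lemma, and the actual source of well-quasi-orderedness, so it does not yet constitute a proof.
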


\begin{proof}(\emph{of inclusion for regular languages})
  This first inclusion follows easily from observing that any finite-state
  machine (FSM) is a particular TVG-automaton whose edges are always
  present and have a nil latency. The fact that we allow waiting here
  does not modify the behavior of the automata as long as we consider
  deterministic FSMs only (which is sufficient), since at most one
  choice exists at each state for each symbol read. Thus, for any
  regular language $L$, there exists a corresponding TVG $\G$ such
  that $L_{wait}(\G)=L$.
\end{proof}


The reverse inclusion is more involved.
Consider $\G=\TVG$ with labels in $\Sigma$, we have to show that $L_{wait}(\G) \in \cal R.$ 

The proof is algebraic, and based on order techniques, relying on a theorem of Harju and Ilie (Theorem~4.16 in~\cite{harju}) that enables to characterize regularity from the closure of the sets from a well quasi-order. We will use  here an inclusion order on journeys (to be defined formally below). Informally, a journey $\J$ is included in another journey $\J'$ if its sequence of transition is included (in the same order) in the sequence of transitions of $\J'$.
  It should be noted that sets of indirect journeys from one node to another are obviously closed 
  under this inclusion order (on the journey $\J$  it is possible to wait on a node as if the missing transitions 
  from $\J'$ were taking place), which is not the case for direct journeys as it is not possible to wait.
 In order to apply the theorem, we have to show that this inclusion order is a well quasi-order,
 i.e. that it is not possible to find an infinite set of journeys such that none of them could be included in another from the same set.

Let us first introduce some definitions and results about quasi-orders.
We denote by $\leq$ a quasi-order over a given set $Q$.
A set $X\subset Q$ is an \emph{antichain} if all elements of $X$
are pairwise incomparable. The quasi-order $\leq$ is \emph{well
  founded} if in $Q$, there is no infinite descending sequence 
$x_1\geq x_2\geq x_3\geq \dots$ (where
$\geq$ is the inverse of $\leq$) such that for no $i$, $x_i\leq
x_{i+1}.$
If $\leq$ is well founded and all antichains are finite then $\leq$
is a \emph{well quasi-order} on $Q$.
When $Q=\Sigma^*$ for alphabet $\Sigma$, a quasi-order is
\emph{monotone} if for all $x,y,w_1,w_2\in \Sigma^*$, we have $x\leq y
\Rightarrow w_1xw_2\leq w_1yw_2$.


    A word $x\in\Sigma^*$ is a
    \emph{subword} of $y\in\Sigma^*$ if $x$ can be obtained by
    deleting some letters on $y$. 
    This defines a relation that is obviously transitive and we denote $\subseteq$ the 
    \emph{subword order} on $\Sigma^*$.
We can extend the $\subseteq$ order to labeled walks as follows:
given two walks $\gamma,\gamma'$  on the footprint $G$ of $\G$,
we note $\gamma\subseteq\gamma'$ if
  $\gamma$ and $\gamma'$ begin on the same node and  
 end on the same node,
 and
   $\gamma$ is a \emph{subwalk} of $\gamma'$.


   Given a date $t\in \lifetime$ and a word $x$ in $\Sigma^*$, we denote by $\JSET{t,x}$ the set $\{\J \in \J^*(\G): start(\J)=t, \lambda(\J)=x\}$. $\JSET{0,x}$ is simply denoted $\JSET{x}$. Given two nodes $u$ and $v$, we allow the notation $\ujourney[t]{u}{x}{v}$ if there exists a journey from $u$ to $v$ in $\JSET{t,x}$.
   Given a journey $\J$, $\overline{\J}$ is the corresponding labeled
  walk (in the footprint $G$). We will denote by $\Gamma(x)$ the set $\{\overline{J} : \lambda(\J)=x\}$.

  Let $x$ and $y$ be two words in $\Sigma^*$. 
  We define the quasi-order $\prec$, as follows:  $x\prec y$ if
  $$\forall \J\in\JSET{y}, \exists\gamma\in\Gamma(x), 
  \gamma\subseteq\overline{\J}.$$
The relation ${\prec}$ is obviously reflexive.
 We now  establish the link between comparable words and
their associated journeys and walks, and state some 
useful properties of relation ${\prec}$.
%

\begin{lemma}\label{walktojourney}
  Let $x,y\in\Sigma^*$ be such that $x\prec y$. Then for any $\J_y\in\JSET{y}$,
  there exists $\J_x\in\JSET{x}$ such that
  $\overline{\J_x}\subseteq\overline{\J_y}$,
   $start(\J_x)=start(\J_y),$
  $arrival(\J_x)=arrival(\J_y).$
\end{lemma}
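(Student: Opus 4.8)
The plan is to fix an arbitrary $\J_y\in\JSET{y}$ and to turn the purely combinatorial witness supplied by $x\prec y$ into a genuine timed journey, exploiting the freedom to wait. By definition of $\prec$, since $\J_y\in\JSET{y}$ there is some $\gamma\in\Gamma(x)$ with $\gamma\subseteq\overline{\J_y}$; thus $\gamma$ is a walk labeled $x$ that begins at the start node of $\overline{\J_y}$, ends at its end node, and whose edge sequence is a subsequence of that of $\overline{\J_y}$. The goal is to exhibit a journey $\J_x$ whose underlying walk is exactly this $\gamma$, which keeps the same start and arrival dates as $\J_y$, and which crosses its edges at legal times. The natural idea is to reuse the very dates at which $\J_y$ already crosses the edges it shares with $\gamma$, and to replace every edge that $\gamma$ omits by a period of waiting.

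Concretely, write $\overline{\J_y}=\langle e_0,\dots,e_{p-1}\rangle$, where the transitions of $\J_y$ cross $e_j$ from date $t_j'$ to date $t_{j+1}$ (with the convention $t_0'=t_0=start(\J_y)$ and $t_p=arrival(\J_y)$), and write $\gamma=\langle e_{i_0},\dots,e_{i_{m-1}}\rangle$ for the retained subsequence $i_0<\dots<i_{m-1}$. I would then define $\J_x$ to cross each retained edge $e_{i_l}$ at exactly the date $t_{i_l}'$ used by $\J_y$, so that it reaches the head of $e_{i_l}$ at date $t_{i_l+1}$, and to wait at the current node until date $t_{i_{l+1}}'$ before crossing the next retained edge; a waiting phase is moreover prepended from $start(\J_y)$ up to $t_{i_0}'$ and appended from $t_{i_{m-1}+1}$ up to $arrival(\J_y)$. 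Each retained edge is crossed at the precise date at which $\J_y$ crosses it, so the presence and latency constraints are inherited from $\J_y$ for free, and the resulting object is an indirect journey labeled $x$ with $\overline{\J_x}=\gamma\subseteq\overline{\J_y}$, $start(\J_x)=start(\J_y)$ and $arrival(\J_x)=arrival(\J_y)$; since $start(\J_y)=0$, this gives $\J_x\in\JSET{x}$, as required.

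The step that really needs checking --- and the one where the ``waiting allowed'' hypothesis is essential --- is that these waiting phases are legitimate, i.e. that each takes place at a single well-defined node and over a non-negative time interval. For the durations, monotonicity of the dates along $\J_y$, namely $t_0\le t_1\le t_1'\le\dots\le t_p$, yields $t_{i_l+1}\le t_{i_{l+1}}'$ and $t_{i_{m-1}+1}\le t_p$, so every interval is non-negative. For the nodes, the key observation is that deleting edges from a walk while preserving its two endpoints forces every deleted block $e_{i_l+1},\dots,e_{i_{l+1}-1}$ to be a \emph{closed} sub-walk: because $\gamma$ is itself a walk, the head $s_{i_l+1}$ of $e_{i_l}$ must coincide with the tail $s_{i_{l+1}}$ of $e_{i_{l+1}}$, and likewise the prefix skipped before $e_{i_0}$ returns to the start node $s_0$ while the suffix skipped after $e_{i_{m-1}}$ departs from the end node $s_p$. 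Hence during each skipped block $\J_x$ may simply idle at that node ``as if the missing transitions of $\J_y$ were taking place'' --- exactly the behaviour that direct journeys cannot afford. I expect this closed-block observation, together with the endpoint waiting needed to match $start$ and $arrival$ exactly, to be the only delicate points; the remainder is bookkeeping on the indices $i_0<\dots<i_{m-1}$.
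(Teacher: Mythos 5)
Your construction is correct and is essentially the paper's own proof: the paper likewise takes the walk $\gamma\in\Gamma(x)$ with $\gamma\subseteq\overline{\J_y}$ supplied by $x\prec y$ and builds $\J_x$ ``by using for every edge of $\J_x$ the schedule of the same edge in $\J_y$,'' i.e.\ crossing each retained edge at exactly the date $\J_y$ crosses it and waiting (which is what the waiting-allowed setting permits) across the skipped portions. Your additional verifications---that each skipped block is a closed walk, so the wait happens at a single node, and that the monotonicity of the dates along $\J_y$ makes every waiting interval non-negative---are precisely the bookkeeping the paper leaves implicit, and your terminal idle phase matching $arrival(\J_y)$ makes explicit the same convention the paper's one-line proof relies on when it asserts $arrival(\J_x)=arrival(\J_y)$ even though the last edge of $\gamma$ need not be the last edge of $\overline{\J_y}$.
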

\begin{proof}
  By definition, there exists a labeled path $\gamma\in\Gamma(x)$ such
  that $\gamma\subseteq \overline{\J_y}$.
  It is then possible to find a journey $\J_x\in\JSET{x}$ with 
  $\overline{\J_x}=\gamma$ and $arrival(\J_x)=arrival(\J_y)$ by using 
  for every edge of $\J_x$ the schedule of the same edge in $\J_y.$ 
\end{proof}

\begin{proposition}
\label{prop:trans}
  The relation $\prec$ is transitive.
\end{proposition}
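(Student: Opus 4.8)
The plan is to verify transitivity straight from the definition of $\prec$ by chasing an arbitrary journey, using Lemma~\ref{walktojourney} as the bridge that turns a walk back into a genuine journey. So I would assume $x\prec y$ and $y\prec z$ and aim to show $x\prec z$, which by definition means: for every $\J_z\in\JSET{z}$ there is some $\gamma\in\Gamma(x)$ with $\gamma\subseteq\overline{\J_z}$. I would therefore fix an arbitrary journey $\J_z\in\JSET{z}$ and set out to produce such a $\gamma$.

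The first step is to exploit the hypothesis $y\prec z$. Applying Lemma~\ref{walktojourney} with the roles of the two words there played by $y$ and $z$, the journey $\J_z$ yields a journey $\J_y\in\JSET{y}$ with $\overline{\J_y}\subseteq\overline{\J_z}$ (the lemma also matches $start$ and $arrival$, but for transitivity only the membership $\J_y\in\JSET{y}$ and the walk inclusion are needed). The crucial feature is that $\J_y$ is an honest element of $\JSET{y}$, so I am entitled to feed it into the hypothesis $x\prec y$. Doing so produces a walk $\gamma\in\Gamma(x)$ with $\gamma\subseteq\overline{\J_y}$.

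It then remains only to compose the two inclusions. Since the subword order is transitive on $\Sigma^*$, and the walk order $\subseteq$ merely adds the requirement that the endpoints agree (a condition visibly preserved under composition), I would conclude from $\gamma\subseteq\overline{\J_y}$ and $\overline{\J_y}\subseteq\overline{\J_z}$ that $\gamma\subseteq\overline{\J_z}$. As $\J_z$ was arbitrary in $\JSET{z}$, this establishes $x\prec z$.

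I expect the only delicate point to be ensuring that the object on which I invoke $x\prec y$ genuinely lies in $\JSET{y}$. The raw definition of $y\prec z$ only guarantees a \emph{walk} in $\Gamma(y)$ sitting inside $\overline{\J_z}$, and $\prec$ is defined by quantifying over journeys, not walks, so one cannot feed a bare walk into $x\prec y$. This is exactly the gap that Lemma~\ref{walktojourney} closes: its proof reschedules the edges of the walk according to the timing of $\J_z$ and inserts pauses at intermediate nodes, thereby realizing the walk as an actual journey of $\JSET{y}$. It is here, and essentially only here, that the possibility of \emph{waiting} is used. Once the walk has been lifted to a genuine journey, the remainder is a routine transitivity chase on the subword order.
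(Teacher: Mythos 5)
Your proof is correct and follows essentially the same route as the paper: both apply Lemma~\ref{walktojourney} to the pair $y\prec z$ to lift the walk inclusion into an actual journey $\J_y\in\JSET{y}$, then invoke the definition of $x\prec y$ on that journey and compose the two subwalk inclusions. Your closing remark correctly identifies the one subtle point (that $\prec$ quantifies over journeys, not walks, which is exactly why the lemma is needed), so nothing is missing.
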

\begin{proof}
  Suppose we have $x\prec y$ and $y\prec z$. Consider
  $\J\in\JSET{z}$. By Lemma~\ref{walktojourney}, we get a journey
  $\J_y\in\JSET{y}$, such that $\overline{\J_y}\subseteq\overline{\J}$.
  By definition, there exists $\gamma\in\Gamma(x)$ such that
  $\gamma\subseteq\overline{\J_y}.$ Therefore
  $\gamma\subseteq\overline{\J},$ and finally $x\prec z.$
\end{proof}

The main proposition to be proved now is the following
\ 
  \begin{proposition}
  \label{prop:wqo}
  $(\Sigma^*,\prec)$ is a well quasi-order.
  \end{proposition}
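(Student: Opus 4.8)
The plan is to reduce the statement to Higman's lemma, using the possibility of waiting to pass from inclusion of walks to inclusion of journeys. Throughout I will use the standard reformulation of the well-quasi-order property: a quasi-order is a well quasi-order iff every infinite sequence $w_1,w_2,\dots$ admits a \emph{good pair}, i.e.\ indices $i<j$ with $w_i\prec w_j$ (this is equivalent to being well founded with finite antichains, the definition given above). Since $\prec$ is already known to be reflexive and transitive (Proposition~\ref{prop:trans}), it is a quasi-order, and it remains only to produce a good pair in every infinite sequence.

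First I would dispose of the degenerate words. If $\JSET{y}=\emptyset$ then $x\prec y$ holds vacuously for every $x$, so in any purportedly bad sequence any $w_j$ with $j\geq 2$ and empty journey set would give the good pair $w_1\prec w_j$; hence, after possibly discarding the first word, we may assume every $w_n$ is realized by some journey. For such realizable words the definition of $\prec$ is implied by the purely combinatorial condition on the footprint that \emph{every} walk $\overline{\J}$ with $\lambda(\J)=y$ admits a subwalk (with the same endpoints) labelled $x$. This implication is exactly where waiting is used: following the proof of Lemma~\ref{walktojourney}, once such a subwalk $\gamma\subseteq\overline{\J}$ labelled $x$ exists, it can be realized as an actual journey $\J_x\in\JSET{x}$ by reusing on each retained edge the schedule it had in $\J$ and pausing on the intermediate nodes in place of the deleted excursions. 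Thus it suffices to prove the well-quasi-order property for the induced inclusion order on labelled footprint-walks and transfer it to $\Sigma^*$.

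The engine is Higman's lemma: since the edge set $E$ is finite, the subword order on $E^*$ is a well quasi-order, as is the subword order on $\Sigma^*$. The main obstacle is that $\prec$ is strictly \emph{finer} than the subword order on $\Sigma^*$, so Higman's lemma cannot be applied directly. Indeed, the endpoint-matching built into the subwalk order destroys comparabilities that hold for subwords: on the footprint $1\xrightarrow{a}2\xrightarrow{b}3$ one has $a\subseteq ab$ yet $a\not\prec ab$, since the only $a$-walk runs from $1$ to $2$ and cannot be a subwalk of the $(1,3)$-walk realizing $ab$. The proof must therefore exploit the finiteness of $V$ to control the \emph{shapes} of walks: a walk is a word over the finite alphabet $E$, its endpoints range over the finitely many pairs of $V\times V$, and — keeping endpoints fixed — passing to a subwalk amounts to deleting closed sub-walks (cycles), of which a pigeonhole argument produces at least one as soon as $|y|>|V|$.

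Concretely, I would encode each realizable word with its family of footprint-walks and run Higman's lemma over $E^*$ one endpoint pair at a time, combining the finitely many pairs by a product (Dickson) argument so that a single index pair is good simultaneously for all endpoints. The genuinely delicate step, which I expect to be the crux, is the \emph{universal} quantifier over all journeys realizing $y$: one needs one pair $i<j$ for which $w_i$ embeds as a subwalk into \emph{every} $w_j$-walk at once, not merely into some of them. This is where the closure of sets of indirect journeys under inclusion (noted before the proof) combines with the finiteness of $V$: the set of $w_j$-walks between fixed endpoints is finite and, modulo cycle insertion, is generated by boundedly many ``core'' walks, so a good pair for the cores is good for the whole family. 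Applying Higman to these finitely many cores and taking the product over endpoint pairs then yields a good pair for $\prec$, which by the reformulation above establishes that $(\Sigma^*,\prec)$ is a well quasi-order.
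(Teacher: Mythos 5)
Your preparatory steps are sound, and they coincide with the paper's own: the good-pair reformulation of well quasi-order, the dismissal of non-realizable words (if $\JSET{y}=\emptyset$ then $x\prec y$ holds vacuously for every $x$), and the equivalence, for realizable $y$, between ``$x\prec y$'' and the combinatorial condition that every walk carrying a journey of $\JSET{y}$ contains a same-endpoints subwalk labelled $x$ --- this back-and-forth between subwalks and journeys via re-used schedules and waiting is precisely Lemma~\ref{walktojourney}. You also correctly locate the difficulty: the universal quantifier over all journeys realizing $w_j$. But your resolution of that crux does not work, for two reasons. First, the ``core'' claim is false: every walk carrying a journey labelled $w_j$ has exactly $|w_j|$ edges, so two distinct such walks are never comparable under subwalk inclusion; the family $\Gamma(w_j)$ is an \emph{antichain}, equal to its own set of minimal elements, and its size is unbounded in $j$. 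It is therefore not ``generated by boundedly many core walks'' modulo cycle insertion. If instead your cores are the finitely many simple $u$--$v$ paths of the footprint, the claim becomes true but useless: those paths are labelled by words of length less than $|V|$, not by $w_i$, so a good pair ``for the cores'' yields no embedding of a walk labelled $w_i$ into \emph{every} walk labelled $w_j$.

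Second, and more fundamentally, the statement you need at that point --- from any infinite sequence of finite sets of walks, extract $i<j$ such that \emph{every} walk of the $j$-th set contains \emph{some} walk of the $i$-th set --- is a well-quasi-order assertion about a powerset (Smyth-type) lifting of the wqo of walks, and such liftings do \emph{not} follow from Higman's lemma plus a Dickson product over the finitely many endpoint pairs. They genuinely fail for general wqos: Rado's classical example produces a wqo and an infinite sequence of finite subsets $A_1,A_2,\dots$ of it with no $m<n$ for which every element of $A_n$ dominates an element of $A_m$. Higman applied per endpoint pair only produces a pair $i<j$ and \emph{one} walk of $w_j$ containing a walk labelled $w_i$, which is strictly weaker than $w_i\prec w_j$. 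This is exactly why the paper does not argue the way you propose: it combines monotonicity (Proposition~\ref{prop:mon}) and well-foundedness (Proposition~\ref{prop:wfund}) with a Nash-Williams-style argument (Theorem~\ref{th:antichain}) that turns a hypothetical infinite antichain $X_0$ into an infinite sequence of infinite antichains $X_{i+1}\subset\textsc{Down}_\prec(X_i)\setminus X_i$, injects each $X_{i+1}$ into $X_i$ through the subword order, and derives a contradiction from the powerset-level Lemma~\ref{higmans} (that $(\mathcal P(Q),\leq_{\mathcal P})$ inherits wqo-ness). An ingredient of that strength --- a wqo statement about \emph{sets} of words or walks, not just Higman's lemma on words --- is what your plan is missing, and without it the final step of your argument cannot be completed.
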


Indeed, consider the two following results.

\begin{definition}
  Let $L\subset\Sigma^*$. For any quasi-order $\leq$, 
  we denote $\textsc{Down}_\leq(L)=\{x\mid \exists y\in L, x\leq y\}.$
\end{definition}
  
This is a corollary of Lemma~\ref{walktojourney}    
\begin{corollary}
  \label{lm:conseq}
  Consider the language $L$ of words induced by labels of journeys
  from $u$ to $v$ starting at time $t$. Then $\textsc{Down}_\prec(L)   L.$  
\end{corollary}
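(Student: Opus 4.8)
The plan is to read the displayed relation as the equality $\textsc{Down}_\prec(L) = L$, i.e. to prove that the language $L=\{x : \ujourney[t]{u}{x}{v}\}$ of labels of journeys from $u$ to $v$ starting at $t$ is downward closed under $\prec$. One inclusion is free: since $\prec$ is reflexive, every $y\in L$ already witnesses $y\in\textsc{Down}_\prec(L)$, so $L\subseteq\textsc{Down}_\prec(L)$. All the content therefore sits in the reverse inclusion $\textsc{Down}_\prec(L)\subseteq L$, that is, in showing that passing to a $\prec$-smaller word never leaves $L$.

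To obtain $\textsc{Down}_\prec(L)\subseteq L$ I would take an arbitrary $x\in\textsc{Down}_\prec(L)$ and unfold the definition: there is some $y\in L$ with $x\prec y$. Membership $y\in L$ supplies a concrete witnessing journey $\J_y$ from $u$ to $v$ with $start(\J_y)=t$ and $\lambda(\J_y)=y$. The key move is to feed this $\J_y$ into Lemma~\ref{walktojourney}, which, from $x\prec y$, produces a journey $\J_x$ with $\lambda(\J_x)=x$, $\overline{\J_x}\subseteq\overline{\J_y}$, and the same start and arrival dates as $\J_y$. Now I would invoke the definition of $\subseteq$ on labeled walks: $\overline{\J_x}\subseteq\overline{\J_y}$ forces the two walks to begin on the same node and to end on the same node. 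Hence $\J_x$ also runs from $u$ to $v$, and since $start(\J_x)=start(\J_y)=t$, its label $x$ is by definition an element of $L$. This closes the nontrivial inclusion and hence the corollary.

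The subtle point --- and the step I would treat with the most care --- is the alignment of start dates. The order $\prec$ is defined through $\JSET{y}=\JSET{0,y}$, i.e. via journeys launched at date $0$, whereas $L$ speaks of journeys launched at date $t$, so Lemma~\ref{walktojourney} as stated does not literally apply when $t\neq 0$. I expect to dispatch this in one of two equivalent ways. Either I specialize to the canonical start date $t=0$ used to define $\prec$, in which case the lemma applies verbatim; or, for general $t$, I first record the evident date-$t$ analogue of Lemma~\ref{walktojourney}, whose proof is identical: the subwalk $\gamma\in\Gamma(x)$ extracted from $\overline{\J_y}$ is a purely footprint-level object, and the journey $\J_x$ is rebuilt by reusing the schedule of $\J_y$ on the edges it retains while \emph{waiting} across the edges it drops. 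It is exactly here that the hypothesis ``waiting is allowed'' is used, and exactly this freedom that makes the downward closure of $L$ hold; the same construction would fail for direct journeys, consistently with the earlier remark that sets of direct journeys are not closed under the inclusion order.
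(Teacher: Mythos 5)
Your proof is correct and is essentially the paper's own argument: the paper gives no explicit proof of this corollary beyond the remark that it follows from Lemma~\ref{walktojourney}, and your unfolding --- reflexivity of $\prec$ for $L\subseteq\textsc{Down}_\prec(L)$, and Lemma~\ref{walktojourney} together with the fact that $\subseteq$ on labeled walks preserves the start and end nodes for $\textsc{Down}_\prec(L)\subseteq L$ --- is precisely the intended derivation (the statement's missing relation symbol should indeed be read as equality). Your flagged subtlety about start dates ($\prec$ being defined through $\JSET{0,\cdot}$ while $L$ concerns journeys launched at date $t$) identifies a genuine imprecision that the paper itself glosses over, and your resolution --- restating the lemma at date $t$ with the identical schedule-reuse proof --- is the right repair rather than a departure from the paper's approach.
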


The following theorem is due to Harju and Ilie:
\begin{theorem}[\cite{harju}]
\label{theo: harju}
  For any monotone well quasi order $\leq$ of $\Sigma^*,$ for any
  $L\subset\Sigma^*,$ the language $\textsc{Down}_\leq(L)$ is
  regular. 
\end{theorem}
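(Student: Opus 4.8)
The plan is to verify the two defining properties of a well quasi-order for $\prec$. Since $\prec$ is already reflexive and transitive (Proposition~\ref{prop:trans}), it is a quasi-order, so it suffices to establish the \emph{good-pair} property: every infinite sequence $(x_n)_{n\geq 1}$ of words in $\Sigma^*$ admits indices $i<j$ with $x_i\prec x_j$ (equivalently, $\prec$ is well founded and all its antichains are finite). The combinatorial engine will be Higman's lemma, which guarantees that the subword order $\subseteq$ on $\Sigma^*$ is itself a well quasi-order because $\Sigma$ is finite. A word of caution is in order, however: the subword order on $\Sigma^*$ does \emph{not} by itself imply $\prec$. For instance, on the footprint with the two edges $u\overset{a}{\to}w\overset{b}{\to}v$ (scheduled so that $ab$ labels a journey from date $0$), the word $a$ is a subword of $ab$, yet every journey labeled $ab$ runs from $u$ to $v$ while every walk labeled $a$ ends at $w$, so $a\not\prec ab$. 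The matching-endpoint clause built into the walk order $\subseteq$ is what forces the argument to be lifted from words to walks of the footprint $G$.

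First I would prove an auxiliary statement: the subwalk order $\subseteq$ on the set of all walks of $G$ is a well quasi-order. Each walk is encoded by its sequence of edges, a word over the \emph{finite} alphabet $E$, so Higman's lemma makes the subsequence order on $E^*$ a well quasi-order. Because $V$ is finite there are only finitely many endpoint pairs $(p,q)$, and within the class of walks sharing a fixed pair $(p,q)$ the subsequence order on $E^*$ coincides with the walk order $\subseteq$ (the matching-endpoint clause becomes automatic). Hence, given an infinite sequence of walks, a pigeonhole step selects an infinite subsequence with a common endpoint pair, on which Higman (restricted to a subset, still a well quasi-order) yields a comparable pair. The passage back to words then proceeds through $\Gamma(\cdot)$ together with Lemma~\ref{walktojourney}: a walk labeled $x$ sitting inside a journey-walk labeled $y$ can always be rescheduled as a genuine journey with the same start and arrival dates, precisely because waiting is permitted and the deleted edges can simply be ``waited out''.

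The step I expect to be the main obstacle is the \emph{universal} quantifier over $\JSET{y}$ in the definition of $\prec$: to conclude $x_i\prec x_j$ one must dominate \emph{every} journey-walk of $x_j$ by some walk of $x_i$ with matching endpoints, whereas a single application of Higman only compares one walk at a time. To coordinate these, I would iterate the pigeonhole/Higman selection across the finitely many endpoint pairs, refining $(x_n)$ to a single infinite subsequence for which one index pair $i<j$ serves uniformly for all journeys of $x_j$ at once; the fact that the walk sets $\Gamma(x_n)$ are recognized by the (finite) footprint automaton is what keeps this coordination finite. An alternative route is a minimal-bad-sequence (Nash--Williams) argument, taking a length-minimal infinite antichain and applying Higman to its tails to derive a contradiction. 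In either case the temporal-feasibility transfer via Lemma~\ref{walktojourney} is comparatively routine; the genuine work lies in reconciling Higman's one-walk-at-a-time comparisons with the ``for all journeys of $y$'' requirement, using the finiteness of $V$ as the essential lever.
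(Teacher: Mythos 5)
You have proved the wrong statement. Theorem~\ref{theo: harju} is a general result about an \emph{arbitrary} monotone well quasi-order $\leq$ on $\Sigma^*$: for every $L\subset\Sigma^*$, the downward closure $\textsc{Down}_\leq(L)$ is \emph{regular}. It is a regularity theorem, and in the paper it is not proved at all --- it is imported verbatim from Harju and Ilie (Theorem~4.16 of the cited work) and then \emph{applied} to the order $\prec$. What you have written is instead a proof sketch of Proposition~\ref{prop:wqo}, the claim that the specific journey-induced order $\prec$ is a well quasi-order: every ingredient of your argument (Higman's lemma on $E^*$, the pigeonhole over the finitely many endpoint pairs of walks in the footprint, the rescheduling via Lemma~\ref{walktojourney}, the $a\not\prec ab$ example, the discussion of the universal quantifier over $\JSET{y}$) concerns $\prec$ on a fixed TVG $\G$, and none of it produces an automaton, a Myhill--Nerode-type argument, or any other device that could establish that a language is regular. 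Verifying that $\prec$ is a monotone wqo is a \emph{hypothesis} one checks before invoking Theorem~\ref{theo: harju} (the paper does this in Propositions~\ref{prop:trans}, \ref{prop:mon}, \ref{prop:wfund} and Theorem~\ref{th:antichain}); it is not the theorem itself.

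A proof of the actual statement would have to run along entirely different lines, in the spirit of the generalized Myhill--Nerode theorem of Ehrenfeucht, Haussler and Rozenberg: observe that $\textsc{Down}_\leq(L)$ is downward closed, so its complement $U$ is upward closed (if $x\in U$ and $x\leq z$ then $z\in U$, by transitivity); by the wqo property the minimal elements of $U$ form a finite antichain $M$, so $U=\bigcup_{m\in M}\{x : m\leq x\}$; and then --- this is the crux, entirely absent from your proposal --- one must show that such finitely based upward-closed sets are regular, which is precisely where monotonicity ($x\leq y \Rightarrow w_1xw_2\leq w_1yw_2$) is indispensable, the statement being false for non-monotone wqos. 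Nothing in your Higman/Nash--Williams machinery substitutes for that step. As a secondary remark: even read charitably as an attempt at Proposition~\ref{prop:wqo}, your key maneuver --- reconciling Higman's one-pair-at-a-time comparisons with the universal quantifier over all journeys labeled $y$ by ``iterating the pigeonhole across endpoint pairs'' --- is only gestured at and is not obviously repairable as stated; the paper instead handles antichain finiteness by a Nash--Williams-style minimal-antichain argument combined with the power-set lifting $\leq_{\mathcal P}$ of Lemma~\ref{higmans}, and handles well-foundedness separately.
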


From Proposition \ref{prop:wqo}, Corollary \ref{lm:conseq}, and
Theorem \ref{theo: harju},
 the claim of  Theorem \ref{theo:wait} will immediately follow.
The remaining of this section is devoted to the proof that $\prec$ is a
well quasi-order. We have first to prove the following.  


\begin{proposition}
\label{prop:mon}
  The quasi-order $\prec$ is monotone.
\end{proposition}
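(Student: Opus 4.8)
The plan is to establish monotonicity by a cut-and-splice argument at the level of walks. Unwinding the definition, I must show that, assuming $x\prec y$, every journey $\J\in\JSET{w_1yw_2}$ admits a walk $\gamma\in\Gamma(w_1xw_2)$ with $\gamma\subseteq\overline{\J}$. The idea is to cut $\J$ at the two boundaries of its label blocks into consecutive sub-journeys $\J_1,\J_y,\J_2$, with $\lambda(\J_1)=w_1$, $\lambda(\J_y)=y$ and $\lambda(\J_2)=w_2$, so that $\J_y$ runs from the node $u_1$ reached by $\J_1$ to the node $u_2$ where $\J_2$ begins, it starts at date $t_1=arrival(\J_1)$, and $\overline{\J}=\overline{\J_1}\cdot\overline{\J_y}\cdot\overline{\J_2}$.

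First I would apply $x\prec y$ to the middle block to obtain a walk $\gamma_x\in\Gamma(x)$ from $u_1$ to $u_2$ with $\gamma_x\subseteq\overline{\J_y}$, and then splice it in: put $\gamma=\overline{\J_1}\cdot\gamma_x\cdot\overline{\J_2}$. Since $\overline{\J_1}$ ends at $u_1$ where $\gamma_x$ starts, and $\gamma_x$ ends at $u_2$ where $\overline{\J_2}$ starts, $\gamma$ is a genuine walk; its label is $w_1\,x\,w_2$, so $\gamma\in\Gamma(w_1xw_2)$; and because the subwalk relation is preserved under concatenation when the endpoints match, $\gamma\subseteq\overline{\J_1}\cdot\overline{\J_y}\cdot\overline{\J_2}=\overline{\J}$ with the same global endpoints, which is exactly $w_1xw_2\prec w_1yw_2$. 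The purely right-sided instance $w_1=\varepsilon$ is the easy case: there the $y$-block is a prefix of $\J$, hence itself a journey with $start(\J_y)=start(\J)=0$, so $x\prec y$ applies to it verbatim and no further argument is needed.

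The hard part will be a date-shift hidden in the left side of the splice. When $w_1\neq\varepsilon$ the block $\J_y$ does not start at date $0$ but at $t_1>0$, so $\J_y\in\JSET{t_1,y}$ rather than in $\JSET{y}=\JSET{0,y}$, and $x\prec y$ does not literally speak about it. The encouraging observation is that what I need from $\J_y$ is only combinatorial: a $\Gamma(x)$-subwalk of the edge sequence $\overline{\J_y}$ with the correct endpoints, which does not depend on the schedule of $\J_y$, and which, once found, realizes as an actual journey for free by reusing the dates of $\J_y$ and waiting across the deleted edges, exactly as in Lemma~\ref{walktojourney}. Thus the whole obstacle collapses to one point: certifying that the containment that $x\prec y$ guarantees for $y$-journeys issued at date $0$ also holds for the walk of a $y$-journey issued later. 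I would isolate this as a short start-date-robustness statement for $\prec$ (equivalently, checking that the universal quantifier defining $\prec$ may be read over $y$-journeys with arbitrary start date) and regard it as the true core of the proof; granting it, the cut-and-splice closes at once.
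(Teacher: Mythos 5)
Your splice architecture is sound, and it is essentially the paper's own proof: the paper establishes $xz\prec yz$ by exactly your prefix decomposition together with Lemma~\ref{walktojourney}, handles left concatenation ``similarly using the start property'', and two-sided monotonicity follows by applying the two one-sided cases in succession. You have also put your finger on the genuine subtlety, namely the date shift in the middle block. The gap is that you defer the whole difficulty to an unproven ``start-date-robustness'' statement and present it as something to be checked; it cannot be checked, because under the paper's literal definition, where $\JSET{y}$ denotes $\JSET{0,y}$ (journeys starting at date $0$), that statement is false --- and so, in fact, is the Proposition itself. Counterexample: take nodes $u,v,w$, an edge $(u,v,a)$ present only at date $0$ with latency $5$, and an edge $(v,w,b)$ present only at date $5$. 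Then $\JSET{0,b}=\emptyset$, so $a\prec b$ holds vacuously; but the journey crossing $(u,v,a)$ at date $0$ and then $(v,w,b)$ at date $5$ belongs to $\JSET{0,ab}$, and its walk admits only one subwalk from $u$ to $w$ (itself, labeled $ab$), so $aa\not\prec ab$. A vacuous hypothesis about date-$0$ journeys carries no information about journeys issued later, so no argument can produce the robustness you need from it.

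The resolution is definitional, not deductive: $\prec$ must be defined (or read) with its universal quantifier ranging over $y$-journeys of arbitrary start date, i.e., over $\bigcup_{t\in\lifetime}\JSET{t,y}$. Under that reading your robustness statement holds by definition, your cut-and-splice closes exactly as you describe (the spliced walk is realized as a journey by reusing the dates of $\J$ and waiting across the deleted edges, as in Lemma~\ref{walktojourney}), and the counterexample above evaporates because $a\prec b$ then fails. This date-unrestricted reading is in fact what the paper itself silently relies on: its own left-concatenation case applies Lemma~\ref{walktojourney} to a suffix journey starting at $t_1>0$, and Corollary~\ref{lm:conseq} invokes $\prec$ for journeys starting at an arbitrary time $t$. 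So your instinct about where the core of the proof lies is exactly right, but as written your proposal rests on a claim that is false under the stated definitions, rather than on the definitional repair that actually closes the argument; to be complete it would have to adopt the corrected definition of $\prec$ at the outset (and observe that the other propositions about $\prec$ survive this change).
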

\begin{proof}
  Let $x,y$ be such that $x\prec y.$ Let $z\in\Sigma^*$. Let
  $\J\in\JSET{yz}$. Then there exists $\J_y\in\JSET{y}$ and
  $\J_z\in\JSET{start(\J_y),z}$ such that the end node of $\J_y$ is
  the start node of $\J_z$.
  By Lemma~\ref{walktojourney}, there exists $\J_x$ that ends in the
  same node as $\J_y$ and with the same $arrival$ time. 
  We can consider $\J'$ the concatenation of $\J_x$ and $\J_z$. By
  construction $\overline{\J'}\in\Gamma(xz)$, and
  $\overline{\J'}\subseteq\overline{\J}.$
  Therefore $xz\prec yz.$
  The property $zx\prec zy$ is proved similarly using the $start$
  property of Lemma~\ref{walktojourney}.
\end{proof}

\begin{proposition}
\label{prop:wfund}
  The quasi-order $\prec$ is well funded.
\end{proposition}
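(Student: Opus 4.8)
My plan is to control $\prec$ by the ordinary subword order $\subseteq$ on $\Sigma^*$, whose well-foundedness is immediate since a strict subword is strictly shorter. The first thing I would establish is a bridging fact: whenever $x\prec y$ and $\JSET{y}\neq\emptyset$, one has $x\subseteq y$. This is short --- fixing any $\J\in\JSET{y}$, the definition of $\prec$ supplies a walk $\gamma\in\Gamma(x)$ with $\gamma\subseteq\overline{\J}$; being a subwalk, $\gamma$ reads its label $x$ off a subsequence of the edges of $\overline{\J}$, whose label is $y$, so $x$ is a subword of $y$ and in particular $|x|\leq|y|$. The upshot is that a $\prec$-step can never increase length, \emph{provided} the larger word admits at least one journey.

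The point where this breaks, and the step I expect to be the main obstacle, is the vacuous case $\JSET{y}=\emptyset$: there $x\prec y$ holds for every $x$ with no constraint at all, so length is uncontrolled. I would isolate it as follows. Suppose toward a contradiction that there is an infinite strictly descending chain $x_1\succ x_2\succ\cdots$, meaning $x_{i+1}\prec x_i$ but $x_i\not\prec x_{i+1}$ for every $i$; transitivity (Proposition~\ref{prop:trans}) then gives $x_j\prec x_i$ for all $j>i$. I claim at most one index can have an empty journey set: if $i<j$ both satisfied $\JSET{x_{i}}=\JSET{x_j}=\emptyset$, then $x_i\prec x_j$ would hold vacuously, and combining it with the chain $x_j\prec\cdots\prec x_{i+1}$ through transitivity would yield $x_i\prec x_{i+1}$, contradicting strictness.

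Granting this, there is an index $N$ past which every $x_i$ has $\JSET{x_i}\neq\emptyset$, so for $i\geq N$ each step $x_{i+1}\prec x_i$ is non-vacuous and the bridging fact gives $x_{i+1}\subseteq x_i$ and $|x_{i+1}|\leq|x_i|$. The lengths $(|x_i|)_{i\geq N}$ thus form a non-increasing sequence of naturals, hence are eventually constant; but $x_{i+1}\subseteq x_i$ together with $|x_{i+1}|=|x_i|$ forces $x_{i+1}=x_i$, whence $x_i\prec x_{i+1}$ by reflexivity, contradicting strictness once more. This rules out any infinite strictly descending chain, establishing that $\prec$ is well founded. The only delicate bookkeeping is the vacuous-quantification case above, where $\prec$ collapses the journeyless words into a single top class; everything else is routine.
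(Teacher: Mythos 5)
Your proof is correct, and it takes a recognizably different route from the paper's, which is worth comparing. The paper runs the descent at the level of \emph{walks}: from $x_1\succ x_2\succ\cdots$ it extracts walks $\gamma_i\in\overline{\JSET{x_i}}$ with $\gamma_{i+1}\subseteq\gamma_i$ (implicitly invoking Lemma~\ref{walktojourney} at each step to convert the walk $\gamma_{i+1}\in\Gamma(x_{i+1})$ back into a journey of $\JSET{x_{i+1}}$ so the descent can continue), observes that a $\subseteq$-descending chain of walks must become stationary, and concludes $x_{i_0}=x_{i_0+1}$, a contradiction. You instead project $\prec$ down to the subword order on $\Sigma^*$ itself via your bridging fact --- a non-vacuous comparison $x\prec y$ forces $x\subseteq y$, hence $|x|\le |y|$ --- and run the stabilization argument on word lengths; this dispenses with the journey-to-walk-to-journey bookkeeping entirely. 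The more substantive difference is your explicit treatment of the vacuous case: when $\JSET{y}=\emptyset$, every $x$ satisfies $x\prec y$, and the paper's proof silently assumes this away (its very first step, choosing $\gamma_i\in\overline{\JSET{x_i}}$, requires these sets to be non-empty). Your transitivity argument handles this; note moreover that it never actually uses emptiness at the \emph{earlier} index, so it proves something slightly stronger and simpler: if $\JSET{x_{i+1}}=\emptyset$ for any $i\ge 1$, then $x_i\prec x_{i+1}$ holds vacuously, contradicting strictness --- hence only the first term of the chain can be journeyless. In short, your proof is sound, and on the vacuous-quantification point it is more complete than the one in the paper.
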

\begin{proof}
 Consider a descending chain $x_1\succ x_2\succ x_3\succ \dots$ such
 that for no $i$ $x_i\prec x_{i+1}.$ We show that this chain is
 finite. Suppose the contrary.
 By definition of $\prec$, we can find $\gamma_1,\gamma_2,\dots$ such
 that for all $i$,  $\gamma_i\in\overline{\JSET{x_i}},$ and such that
 $\gamma_{i+1}\subseteq\gamma_i.$ This chain of walks is necessarily
 stationary and there exits $i_0$ such that
 $\gamma_{i_0}=\gamma_{i_0+1}.$ 
 Therefore, $x_{i_0}=x_{i_0+1},$ a contradiction. 
\end{proof}

To prove that $\prec$ is a well quasi-order, we now have to prove that
all antichains are finite.
\macro{\leqP}{\leq_{\mathcal P}}
  Let $(Q,\leq)$ be a quasi-order. For all $A,B\subset Q$, we denote
  $A\leqP B$ if there exists an injective mapping
  $\varphi:A\longrightarrow B$, such that for all $a\in A$, $a\leq\varphi(a).$
  The relation \leqP is transitive and defines a quasi-order on
  $\mathcal P(Q).$

About the finiteness of antichains, we recall the following
result

\begin{lemma}[\cite{higman}]\label{higmans}
  Let $(Q,\leq)$ be a well quasi-order. Then $(\mathcal
  P(Q),\leqP)$ is a well quasi-order.
\end{lemma}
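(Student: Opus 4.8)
The plan is to reduce the claim to the classical \emph{sequence} form of Higman's lemma: if $(Q,\le)$ is a well quasi-order, then the set $Q^*$ of finite sequences over $Q$, ordered by the embedding order $\le_{\mathrm{emb}}$ (with $u\le_{\mathrm{emb}}v$ iff some strictly increasing index map $f$ satisfies $u_k\le v_{f(k)}$ for all $k$), is again a well quasi-order. Granting this, I would fix for every finite $A\subseteq Q$ an arbitrary enumeration $\hat A\in Q^*$ of its elements. The crucial link is that an embedding lifts to an injective domination: from $\hat A\le_{\mathrm{emb}}\hat B$ witnessed by a strictly increasing $f$, the assignment $a_k\mapsto b_{f(k)}$ is injective (strict monotonicity of $f$ yields distinct indices, hence distinct elements of the set $B$) and satisfies $a_k\le b_{f(k)}$, so $A\leqP B$. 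Consequently, given any infinite sequence $A_1,A_2,\dots$ of finite subsets, the corresponding sequence $\hat{A_1},\hat{A_2},\dots$ in $Q^*$ admits, by Higman's lemma, indices $i<j$ with $\hat{A_i}\le_{\mathrm{emb}}\hat{A_j}$, whence $A_i\leqP A_j$. Thus no infinite bad sequence of finite subsets exists, which is exactly well-foundedness together with finiteness of all antichains.

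To keep this self-contained I would establish the sequence form by the minimal bad sequence method of Nash-Williams. Supposing $Q^*$ were not a well quasi-order, pick a \emph{bad} sequence $(s_i)$ --- one with no $i<j$ satisfying $s_i\le_{\mathrm{emb}}s_j$ --- minimally: choose $s_1$ of least length among the first terms of bad sequences, then $s_2$ of least length among the bad continuations of $s_1$, and so on, the length being a well-founded measure that legitimises the choice. No $s_i$ is empty (the empty word embeds into everything), so write $s_i=a_i t_i$ with head $a_i\in Q$ and tail $t_i\in Q^*$. Since $Q$ is a well quasi-order, the sequence of heads $(a_i)$ has an infinite nondecreasing subsequence $a_{i_1}\le a_{i_2}\le\cdots$ with $i_1<i_2<\cdots$; reinserting the tails, the sequence $s_1,\dots,s_{i_1-1},t_{i_1},t_{i_2},\dots$ is still bad but strictly smaller at index $i_1$, contradicting minimality. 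This proves the sequence form, and with it the finite-subset case above.

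The step I expect to be the real obstacle is passing from finite members of $\mathcal P(Q)$ to arbitrary ones, since the statement quantifies over all of $\mathcal P(Q)$. The encoding by enumerations is unavailable for infinite subsets, and the obvious substitute --- replacing a set $A$ by its antichain $\min(A)$ of minimal elements, which is finite because $Q$ is a well quasi-order --- does not respect $\leqP$: one may have $\min(A)\leqP\min(B)$ yet $A\not\leqP B$, because domination must account for \emph{every} element of $A$ while the non-minimal elements can be unbounded. This is not merely a defect of the particular reduction: under the bare well quasi-order hypothesis the infinite power set genuinely need not be well quasi-ordered, so all the difficulty is concentrated in the infinite case. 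For the use made here in Proposition~\ref{prop:wqo} this is harmless, because the only subsets ever compared are the walk-sets $\Gamma(x)$, and these are finite --- the footprint $G$ is a finite graph and $x$ has finite length, so finitely many labelled walks realise a given word. Accordingly I would invoke the lemma in its finite-subset form, which is precisely what the reduction above delivers and all that the Harju--Ilie characterisation requires.
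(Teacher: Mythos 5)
Your mathematics for the finite-subset case is correct: the reduction to the sequence form of Higman's lemma, the Nash--Williams minimal bad sequence proof of that form, and the lifting of an embedding $\hat A\le_{\mathrm{emb}}\hat B$ to an injective domination $A\leq_{\mathcal P}B$ are all sound, and since the paper offers no proof at all (Lemma~\ref{higmans} is simply cited to Higman), a self-contained argument along these lines is welcome. Your observation that the statement as printed is false for the full power set of a general well quasi-order is also correct and genuinely sharp --- Rado's counterexample shows that $\mathcal P(Q)$ under (injective or plain) domination need not be a wqo, so the lemma in its literal form cannot be what Higman proved.

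The genuine gap is in your final paragraph, where you claim the restriction to finite subsets is harmless because ``the only subsets ever compared are the walk-sets $\Gamma(x)$.'' That misidentifies the lemma's actual use. Lemma~\ref{higmans} is invoked exactly once, at the end of the proof of Theorem~\ref{th:antichain}, and there it is applied to the sets $X_i\subseteq\Sigma^*$, which are by the contradiction hypothesis \emph{infinite} antichains: the argument manufactures an infinite $\subseteq_{\mathcal P}$-descending chain of infinite subsets of $\Sigma^*$ and appeals to the lemma to rule it out. (Indeed, the remark immediately following that theorem explicitly cautions that $\prec$ does \emph{not} correspond to $\subseteq_{\mathcal P}$ on the walk sets $\Gamma(x)$; those sets are never the objects compared under $\leq_{\mathcal P}$.) So the finite-subset version you prove does not license the paper's application, and your proposal establishes a strictly weaker statement than the one the proof of Theorem~\ref{th:antichain} relies on. To close the gap you would have to either prove the power-set claim for the specific quasi-order actually used, $(\Sigma^*,\subseteq)$ --- e.g., via better-quasi-order theory, since finite words over a finite alphabet under subword embedding form a bqo and power sets of bqos remain well quasi-ordered under domination, with some care about the injectivity built into $\leq_{\mathcal P}$ --- or rework the antichain argument so that only finite subsets are ever compared; as written, your fix does neither.
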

  
\noindent and 
 the fundamental result of Higman:
 
\begin{theorem}[\cite{higman}]
 Let $\Sigma$ be a finite alphabet. Then $(\Sigma^*,\subseteq)$ is a
 well quasi-order.
\end{theorem}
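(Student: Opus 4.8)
The plan is to prove Higman's theorem via Nash-Williams' \emph{minimal bad sequence} argument, the standard self-contained route, which invokes finiteness of $\Sigma$ in exactly one place. First I would recall the equivalent reformulation of ``well quasi-order'': a quasi-order $\leq$ on $Q$ is a wqo iff it admits no infinite \emph{bad} sequence, i.e.\ no infinite $(w_i)_{i\geq 1}$ with $w_i\not\leq w_j$ for all $i<j$. For $(\Sigma^*,\subseteq)$ this is more convenient than checking the paper's two defining conditions separately; indeed well-foundedness alone is immediate, since $x\subsetneq y$ forces $|x|<|y|$ and so descending chains have strictly decreasing lengths, and the only real content is the absence of infinite antichains, which the bad-sequence formulation subsumes. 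After noting why this reformulation matches the definition given earlier, I would argue by contradiction and assume a bad sequence exists.

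Next I would construct a \emph{minimal} bad sequence $(w_i)$ greedily: having fixed $w_1,\dots,w_{i-1}$ so that they form the prefix of some bad sequence, choose $w_i$ of least possible length among all words extending $w_1,\dots,w_{i-1}$ to the prefix of a bad sequence. This is well defined by induction, since the bad sequence witnessing stage $i-1$ supplies a candidate $i$-th term; and the resulting infinite sequence is itself bad, because any increasing pair $w_i\subseteq w_j$ would already contradict the badness of the sequence extending the prefix $w_1,\dots,w_j$. Each $w_i$ is nonempty, as $\varepsilon\subseteq w$ for every word $w$ and hence $\varepsilon$ cannot occur in a bad sequence; so I write $w_i=a_i v_i$ with first letter $a_i\in\Sigma$ and tail $v_i\in\Sigma^*$. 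Here finiteness of $\Sigma$ enters: by the pigeonhole principle some fixed $a\in\Sigma$ equals $a_i$ for infinitely many indices, so I extract an infinite subsequence $w_{i_1},w_{i_2},\dots$ all beginning with $a$.

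Finally I would draw the contradiction from minimality. Consider the spliced sequence $w_1,\dots,w_{i_1-1},v_{i_1},v_{i_2},v_{i_3},\dots$. Its $i_1$-th term $v_{i_1}$ is strictly shorter than $w_{i_1}$, so by the minimal choice of $w_{i_1}$ this sequence cannot be bad and must contain an increasing pair. I then rule out each possibility against the badness of the original sequence: a pair inside the prefix $w_1,\dots,w_{i_1-1}$ is impossible directly; a pair $w_p\subseteq v_{i_j}$ with $p<i_1$ gives $w_p\subseteq v_{i_j}\subseteq w_{i_j}$ with $p<i_j$, a contradiction; and a pair $v_{i_j}\subseteq v_{i_k}$ with $j<k$ yields $w_{i_j}=a v_{i_j}\subseteq a v_{i_k}=w_{i_k}$ with $i_j<i_k$, again a contradiction. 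As every case is impossible, no bad sequence exists and $(\Sigma^*,\subseteq)$ is a wqo.

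The main obstacle is setting up the minimal bad sequence cleanly --- verifying that the greedy construction is well defined at every stage and that the limiting sequence is genuinely bad --- and then making the concluding case analysis airtight, especially the middle case, where one must use that $v_{i_j}$ is a subword of $w_{i_j}$. The last case, $w_{i_j}=a v_{i_j}\subseteq a v_{i_k}=w_{i_k}$, is the point where the subword structure (as opposed to an arbitrary wqo) is genuinely used: prepending a common letter preserves the subword relation.
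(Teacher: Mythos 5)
Your proof is correct, but there is nothing in the paper to compare it against: the paper states this result with a citation to Higman (1952) and uses it as a black box, giving no proof of its own. What you have written is the classical Nash--Williams minimal-bad-sequence argument, and the essential points are all in place: the greedy length-minimal construction is well defined at every stage and yields a bad sequence, $\varepsilon$ cannot occur in a bad sequence, finiteness of $\Sigma$ gives by pigeonhole an infinite subsequence with a common first letter, and the three-way case analysis on the spliced sequence $w_1,\dots,w_{i_1-1},v_{i_1},v_{i_2},\dots$ is exhaustive, each case correctly contradicting either the badness of the original sequence or the length-minimality of $w_{i_1}$. One point to make explicit in a final write-up: the paper defines a well quasi-order as \emph{well founded with all antichains finite}, whereas you work with the no-infinite-bad-sequence characterization; the implication you actually need (no bad sequence implies both of the paper's conditions) is immediate, since an infinite antichain is a bad sequence and so, using transitivity, is any infinite descending chain violating well-foundedness, but the converse implication requires Ramsey's theorem --- so you should say you only invoke the easy direction, rather than assert the equivalence wholesale. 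Finally, your route is very much in the spirit of the paper itself: the authors prove their own antichain-finiteness theorem for the order $\prec$ ``by using a technique similar to the variation by Nash-Williams of the proof of Higman,'' so your argument supplies, for the subword order, precisely the kind of proof the paper assumes known and then imitates for its new relation.
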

  
\noindent This implies that our set of journey-induced walks is
also a well quasi-order for $\subseteq$ as it can be seen as a 
special instance of Higman's Theorem about the subword order.

We are now ready to prove that all antichains are finite.

\begin{theorem}
\label{th:antichain}
  Let $L\subset\Sigma^*$ be an antichain for $\prec$. Then $L$ is finite.
\end{theorem}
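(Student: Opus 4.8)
The plan is to argue the contrapositive: if $L\subseteq\Sigma^*$ is infinite, then it cannot be a $\prec$-antichain, i.e.\ some two distinct words of $L$ must be $\prec$-comparable. The natural engine is the well-quasi-order machinery already assembled. Since the footprint $G$ is finite, each word $x$ induces only finitely many journey-walks, so $\overline{\JSET{x}}$ is a finite subset of the set $\mathcal W$ of all journey-induced walks, and $(\mathcal W,\subseteq)$ is a well quasi-order by Higman's theorem. Hence, by Lemma~\ref{higmans}, $(\mathcal P(\mathcal W),\leqP)$ is a well quasi-order as well. I would therefore consider the map $x\mapsto\overline{\JSET{x}}$ and feed the infinite family $\{\overline{\JSET{x}}:x\in L\}$ to this order, extracting two distinct words $x,y$ whose walk-sets are $\leqP$-comparable, with the aim of reading this comparison back as $x\prec y$.

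Before extracting the pair I would trim the data so that the comparison can actually be translated into $\prec$. Two preliminary reductions help. First, since $V\times V$ is finite, infinitely many words of $L$ realize the same set of endpoint pairs among their journeys; by the pigeonhole principle I may pass to an infinite subfamily $L'\subseteq L$ on which this endpoint set is constant, and argue one endpoint pair $(u,v)$ at a time. Second, using Lemma~\ref{walktojourney} I would reduce the defining condition to minimal walks: $x\prec y$ holds if and only if every $\subseteq$-minimal walk of $\overline{\JSET{y}}$ contains, as a subwalk with matching endpoints, some walk of $\overline{\JSET{x}}$. Indeed every journey-walk dominates a minimal one, and an $x$-walk lying inside the minimal walk also lies inside any larger walk above it. Replacing $\overline{\JSET{x}}$ by its finite antichain of minimal walks keeps everything inside the same well quasi-order while making $\prec$ depend only on finitely many ``obstruction'' walks.

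The main obstacle is the direction of quantification. The order $\leqP$ delivers an injective domination: every walk of one word embeds into a distinct walk of the other. The relation $\prec$, however, is a universal covering: every journey of $y$ must contain a walk of $x$. These are genuinely different (injective domination is a Hoare-type comparison, whereas $\prec$ is a covering of Smyth type), so a raw $\leqP$-comparison of $\overline{\JSET{x}}$ and $\overline{\JSET{y}}$ need not yield $x\prec y$, because the injection may fail to be onto and leave some $y$-walks uncovered. Bridging this gap is where the real work lies. My plan is to upgrade the domination to a covering by controlling multiplicities: working one endpoint pair at a time and with minimal walks only, I would use the fact that every infinite sequence in a well quasi-order has an infinite $\leqP$-nondecreasing subsequence to isolate a pair whose minimal-walk sets are matched \emph{bijectively} by the domination, so that every minimal $y$-walk is an image and therefore contains a minimal $x$-walk. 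By the minimal-walk characterization this is exactly $x\prec y$, contradicting that $L$ is an antichain. I expect this matching step — reconciling the symmetric, injective input from Lemma~\ref{higmans} with the asymmetric covering demanded by $\prec$ — to be the delicate part of the argument.
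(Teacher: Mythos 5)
Your diagnosis of the Hoare/Smyth mismatch is exactly right, and it is precisely where your proof breaks: the bridging step you defer is not a technicality but the whole difficulty, and the bridge you sketch does not hold. What Lemma~\ref{higmans} can ever give you is an \emph{injective domination} of walk sets (each $x$-walk embeds into some distinct $y$-walk), while $x\prec y$ demands a \emph{covering} (every minimal $y$-walk contains some $x$-walk). Your plan — extract an infinite $\leq_{\mathcal P}$-nondecreasing subsequence of minimal-walk sets and isolate a pair matched bijectively — fails because along such a chain the cardinalities of the minimal-walk sets are only nondecreasing, and nothing bounds the number of minimal walks across an infinite family of words; if these cardinalities increase strictly, no two distinct words admit any bijective domination at all, and an injection that misses even one minimal $y$-walk leaves a $y$-journey that may contain no $x$-walk, so $x\prec y$ cannot be concluded. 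Moreover the obstruction is not an artifact of your bookkeeping: the abstract fact your strategy needs — that for a wqo $(Q,\le)$ the Smyth-type order on finite subsets ($A\le B$ iff every element of $B$ dominates some element of $A$) is again a wqo — is \emph{false} in general. By Rado's counterexample there is a wqo whose upward-closed sets (all finitely generated, since the order is wqo) form an infinite antichain under reverse inclusion; equivalently, the Smyth order on its finite subsets has an infinite antichain. So no manipulation of Lemma~\ref{higmans}, which is a Hoare-type statement, can yield the Smyth-type comparison; one would need genuinely finer properties of the subword/subwalk order (it is in fact a better-quasi-order), which you do not invoke. The paper's closing Remark flags exactly this trap: identifying a word with its set of walks reverses the order ($\Gamma(x)\subset\Gamma(y)$ relates to $y\prec x$, not $x\prec y$), and the theorem ``cannot be derived by a simple application of the results of Higman.''

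The paper's own proof avoids comparing walk sets altogether and runs a Nash-Williams-style descent on words. Its key lemma states that if $X$ is an antichain and $\prec$ is a wqo on $\textsc{Down}_\prec(X)\setminus X$, then $X$ is finite: each word of $X$ is written as $ax$ with $a\in\Sigma$ and $x\in\textsc{Down}_\prec(X)\setminus X$, and monotonicity (Proposition~\ref{prop:mon}) forces these pairs to form an antichain of the product wqo $\Sigma\times(\textsc{Down}_\prec(X)\setminus X)$, hence a finite set. A hypothetical infinite antichain $X_0$ then yields, by recursive application of the lemma, infinite antichains $X_{i+1}\subset\textsc{Down}_\prec(X_i)\setminus X_i$ with $X_{i+1}\subseteq_{\mathcal P}X_i$ — the descent taking place in the powerset of the \emph{subword} order, since $x\prec y$ forces $x$ to be a subword of $y$ — and this infinite strictly descending chain contradicts Lemma~\ref{higmans}. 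To salvage your route you would have to either adopt such a descent, or prove the Smyth-type powerset theorem specifically for the subwalk order (e.g., via bqo theory); as written, the step you call delicate is a genuine gap.
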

  
We prove this theorem  by using a technique similar to the variation  by
\cite{nashwilliams} of the proof of \cite{higman}.
  First, we need the following property:\\

\begin{lemma}
  Let $X$ be an antichain of $\Sigma^*$. If
  $\prec$ is a well quasi-order on $\textsc{Down}_\prec(X)\backslash X$
  then $X$ is finite.
\end{lemma}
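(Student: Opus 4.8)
The plan is to prove the contrapositive-flavored statement: assuming $\prec$ is a well quasi-order on $\textsc{Down}_\prec(X)\setminus X$, deduce that the antichain $X$ itself must be finite. The natural strategy, following the Nash-Williams style of minimal bad sequence argument adapted here, is to suppose for contradiction that $X$ is infinite and derive that $\prec$ fails to be a well quasi-order on $\textsc{Down}_\prec(X)\setminus X$, contradicting the hypothesis.

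First I would examine the structure of elements of $X$. Each $x \in X$ is a word, and since $X$ is an antichain for $\prec$, no element of $X$ is $\prec$-below another. For each $x \in X$ I would like to strip it down: write $x$ in terms of its letters and consider the words obtained by deleting a single letter (or otherwise decomposing $x$ into a strictly $\prec$-smaller word together with bounded extra data). The key observation is that any such strict reduction of $x$ lands in $\textsc{Down}_\prec(X)\setminus X$: it is $\prec$-below $x$ (so in the down-closure) but cannot itself lie in $X$ (else it would be comparable to $x$, violating the antichain property). This is precisely what makes the hypothesis about $\textsc{Down}_\prec(X)\setminus X$ usable.

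The core of the argument is then a pigeonhole-plus-Higman combination. Since $X$ is infinite and the alphabet $\Sigma$ is finite, infinitely many words of $X$ must share a common first letter (or some common bounded prefix/decomposition marker); pass to that infinite subfamily and peel off the shared letter, obtaining an infinite family of proper residuals living in $\textsc{Down}_\prec(X)\setminus X$. By the hypothesis that $\prec$ is a well quasi-order on that set, this infinite family cannot be an antichain and has no infinite strictly descending chain, so it must contain two elements that are $\prec$-comparable. The final step is to lift this comparability back up: using monotonicity of $\prec$ (Proposition~\ref{prop:mon}), re-attaching the common letter preserves the comparison, which would produce two comparable elements of the original $X$ — contradicting that $X$ is an antichain.

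The main obstacle I anticipate is the lifting step, specifically ensuring that the decomposition of each $x \in X$ is compatible with monotonicity so that comparability of residuals genuinely yields comparability of the full words under $\prec$. Unlike the plain subword order, here $\prec$ is defined through journeys and footprint walks, so I must verify that peeling off a prefix letter $a$ and the relation $w \prec w'$ really do combine to give $a.w \prec a.w'$; this is exactly where Proposition~\ref{prop:mon} is essential, and I would need to confirm the decomposition is taken so that the shared letter sits in a position where left- (or right-) monotonicity applies. A secondary subtlety is guaranteeing the residuals truly fall outside $X$ and inside the down-closure; handling the empty-residual case (when an $x$ reduces to $\varepsilon$) and boundary cases may require a short separate argument, but the essential engine remains finite-alphabet pigeonhole together with the assumed well-quasi-ordering on the strictly smaller stratum.
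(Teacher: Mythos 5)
Your proposal is correct and takes essentially the same route as the paper: the paper packages your pigeonhole-plus-lifting argument as the observation that $A=\{(a,x) : a\in\Sigma,\ x\in \textsc{Down}_\prec(X)\setminus X,\ ax\in X\}$ must be a finite antichain of the product well quasi-order $\Sigma\times(\textsc{Down}_\prec(X)\setminus X)$, with monotonicity (Proposition~\ref{prop:mon}) providing exactly the lifting step you describe. The point you flag as delicate --- that each residual obtained by peeling off a letter lands in $\textsc{Down}_\prec(X)\setminus X$ --- is relied upon just as implicitly by the paper, so no genuine gap separates the two arguments.
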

\begin{proof} 
  We denote $Q=\textsc{Down}_\prec(X)\backslash X$, and suppose $Q$ is
  a well quasi-order for $\prec$. 
  Therefore the product and the associated product
  order $(\Sigma\times Q,\prec_\times)$ define also a well
  quasi-order. 
  We consider $A=\{(a,x) \mid a\in\Sigma x\in Q ax\in X\}$.
  Because $\prec$ is monotone, for all $(a,x),(a',x')\in A$,
  $(a,x)\prec_\times(b,y)\Rightarrow ax\prec by$. Indeed, in this case
  $a=b$ and $x\prec y\Rightarrow ax\prec ay.$
  So $A$ has to be an antichain of the well quasi-order $\Sigma\times
  Q$. Therefore $A$ is finite. By construction, this implies that $X$ is
  also finite.  
\end{proof}

\begin{proof}
  We can now end the proof of Theorem~\ref{th:antichain}. 
  Suppose we have an
  infinite antichain $X_0$. By applying recursively the previous
  lemma, there exists for all $i\in\N$,
  $X_{i+1}\subset\textsc{Down}_\prec(X_i)\backslash X_i$ that 
  is also an infinite antichain of $\Sigma^*.$ By definition of
  $\textsc{Down}_\prec$, for all $x\in X_{i+1}$, there exists $y\in
  X_i$ such that $x\prec y$, ie $x\subseteq y$. It is also possible to
  choose the elements $x$ such that no pair is sharing a common $y$.
  So $X_{i+1}\subseteq_{\mathcal P} X_{i}$, 
  and we have a infinite descending chain of $(\mathcal
  P(\Sigma^*),\subseteq_{\mathcal P})$. This would contradict
  Lemma~\ref{higmans}.
\end{proof}


From Propositions \ref{prop:trans}, \ref{prop:mon}, \ref{prop:wfund} and Theorem 
\ref{th:antichain}
it follows that
$(\Sigma^*,\prec)$ is a well quasi-order, completing the proof of 
Proposition \ref{prop:wqo}; thus, $L_{wait}(\G)$ is a regular language
for any TVG $\G$, concluding the proof of Theorem \ref{theo:wait}.
That is ${\cal L}_{wait}  = {\cal R}$.

\begin{remark}
  The reader should note that $\prec$ does not correspond to
  $\subseteq_{\mathcal P}$ if we were to identify a word $x$ with the
  subset of corresponding walks.
  Indeed, if $W$ denotes the set of walks, then 
  for $A,B\in\mathcal P(W)$, $A\subset B \Rightarrow A\subseteq_P
  B$, however if $A=\Gamma(x), B=\Gamma(y)$, then $A\subset B \Rightarrow
  y\prec x.$  
  Therefore  the above theorem cannot be derived by a simple
  application of the results of Higman.
\end{remark}

\section{Bounded waiting allowed}
\label{sec:bounded-waiting}

To better understand the power of waiting, 
we now turn our attention to  {\em bounded waiting}; that is when 
indirect journeys are considered feasible if the pause 
between consecutive edges has a bounded duration $d>0$.
We examine  the set  ${\cal L}_{wait[d]}$ of all languages expressed by time-varying graphs when waiting is allowed up to  $d$ time units, and prove the negative result that for any fixed $d\geq 0$, ${\cal L}_{wait[d]} = {\cal L}_{nowait}$. 
That is, the complexity of the environment is not affected by allowing waiting for a limited amount of time. 

The basic idea is to reuse the same technique as in Section~\ref{sec:no-waiting}, but with a dilatation of time, i.e., given the bound $d$, the edge schedule is time-expanded by a factor $d$ (and thus no new choice of transition is created compared to the no-waiting case).

\begin{theorem}
  For any duration $d$, ${\cal L}_{wait[d]}={\cal L}_{wait[0]}$ ({\it i.e.,} ${\cal L}_{nowait}$)
\end{theorem}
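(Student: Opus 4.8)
The plan is to prove the two inclusions separately; the reverse one, ${\cal L}_{wait[d]} \subseteq {\cal L}_{nowait}$, is the routine half. Since presence and latency are computable functions over a finite footprint, and since (in the discrete setting underlying our constructions) each waiting step offers at most $d+1$ admissible pauses, a Turing machine can enumerate the finitely many bounded-waiting journeys labelled by any given word and decide acceptance; hence every $L_{wait[d]}(\G)$ is computable. By the theorem of Section~\ref{sec:no-waiting}, ${\cal L}_{nowait}$ contains every computable language, so $L_{wait[d]}(\G) \in {\cal L}_{nowait}$ and this inclusion follows.

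The forward inclusion ${\cal L}_{nowait} \subseteq {\cal L}_{wait[d]}$ carries the real content. Given $L \in {\cal L}_{nowait}$ (necessarily computable, by the easy half above), I would replay the construction of $\G_2$ from Section~\ref{sec:no-waiting} after \emph{dilating time} by the factor $D := d+1$: everywhere the encoding $\varphi$ is replaced by $\psi := D\cdot\varphi$, producing a TVG $\G'$. Thus an edge toward $v_1$ (resp.\ $v_2$) is present at date $t$ iff $t \in \psi(\Sigma^*)$ and $\psi^{-1}(t).i \in L$ (resp.\ $\notin L$), with latency $\psi(\psi^{-1}(t).i)-t$. The induction proving the Section~\ref{sec:no-waiting} theorem goes through unchanged and shows that the unique direct journey labelled $w.i$ arrives at date $\psi(w.i)$; in particular every date at which the journey ever sits at a node lies in $\psi(\Sigma^*) \subseteq D\,\N$.

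The crux is then to show that pauses of length at most $d$ add no journey to $\G'$. Any two distinct dates in $\psi(\Sigma^*) \subseteq D\,\N$ differ by at least $D = d+1 > d$. Hence if a journey reaches a node at date $t \in \psi(\Sigma^*)$ and waits $\delta \in [0,d]$ before reading a symbol, the attempted edge can be present only if $t+\delta \in \psi(\Sigma^*)$, which forces $\delta = 0$; the same bound rules out any initial wait before the first transition. Consequently the bounded-waiting automaton on $\G'$ has exactly the transitions of its direct counterpart, so $L_{wait[d]}(\G') = L_{nowait}(\G') = L$ and $L \in {\cal L}_{wait[d]}$. I expect this isolation step to be the main obstacle: one must ensure that waiting cannot reach a fresh presence date not merely at a single node but \emph{compounded} along a whole journey. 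The dilation resolves this uniformly, because the invariant ``every active date is a multiple of $D$'' is preserved at every step (all arrival dates are values of $\psi$), so the gap argument applies independently, step by step, to the entire journey.
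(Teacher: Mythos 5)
Your forward inclusion is, in substance, exactly the paper's own proof. The paper dilates the Section~\ref{sec:no-waiting} construction by the factor $d+1$, defining $\varphi_d(w)=(d+1)\varphi(w)$ (your $\psi$), re-runs the same induction to obtain the invariant that every journey labelled $w$ arrives at date exactly $\varphi_d(w)$ even when $d$-bounded pauses occur, and closes with your gap argument: encodings of distinct words differ by at least $d+1>d$, so no pause of length at most $d$ can reach the presence date of a different continuation. The only cosmetic difference is that the paper keeps edges present on the whole interval $\lfloor t/(d+1)\rfloor$-wise (so a pause inside an interval is harmless because the latency re-synchronizes the journey to $\varphi_d(w.i)$), whereas you make edges present only at the exact dates in $\psi(\Sigma^*)$; both designs work and rest on the same two ideas.

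Where you genuinely depart from the paper is the reverse inclusion ${\cal L}_{wait[d]}\subseteq{\cal L}_{nowait}$, which the paper does not prove at all: its proof only exhibits, for each computable $L$, a TVG with $L_{wait[d]}(\G_{2,d})=L$, implicitly identifying both classes with the computable languages. Your instinct to supply this half is right, but your argument over-claims. The assertion that there are finitely many $d$-bounded journeys labelled by a given word needs two hypotheses that the class ${\cal U}$ of all TVGs does not guarantee: (i) a discrete time domain --- the model explicitly allows $\T=\R^+$, where the admissible pauses form the continuum $[0,d]$ rather than $d+1$ values; and (ii) a fixed starting date for accepting journeys --- otherwise the start time ranges over an unbounded lifetime, your enumeration never terminates, and you obtain only recursive enumerability, which is not enough: the Section~\ref{sec:no-waiting} construction requires $L$ to be decidable, since its presence function must decide whether $\varphi^{-1}(t).i\in L$. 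Note also that this gap propagates into your forward half, because you invoke the ``easy half'' to assert that any $L\in{\cal L}_{nowait}$ is computable before you can even build $\G'$ (a TVG needs computable presence and latency); the paper sidesteps this circularity by quantifying directly over computable languages. Under the paper's implicit conventions (discrete time, journeys launched at a fixed initial date) your proposal is sound and coincides with the published argument; in full generality, the reverse half of the equality remains as unproven in your write-up as it is in the paper's.
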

\begin{proof}

  Let $L$ be an arbitrary computable  language defined over a finite alphabet $\Sigma$.
  Let $d\in\N$ be the maximal waiting duration.
  We consider a TVG $\G_{2,d}$ structurally equivalent to $\G_2$ (see Figure~\ref{fig:nowaiting} in Section~\ref{sec:no-waiting}), {\it i.e.,},
  $\G_{2,d}= \TVG$ such that $V=\{v_0,v_1,v_2\}$, $E=\{\{ (v_0,v_1,i),i\in\Sigma\} \cup \{\{ (v_0,v_2,i),i\in\Sigma\}, \cup \{(v_1,v_1,i),i\in\Sigma\}\cup\{(v_1,v_2,i),i\in\Sigma\}\cup\{(v_2,v_1,i),i\in\Sigma\}\cup\{(v_2,v_2,i),i\in\Sigma\}\}$. 
  The initial state is $v_0$, and the accepting state is $v_1$.
  If $\varepsilon\in L$ then $v_0$ is also accepting.

  Based on the mapping $\varphi$ defined for $\G_2$ in Section~\ref{sec:no-waiting}, we define
  another mapping $\varphi_d$ that associates to any word $w$ the value
  $(d+1)\varphi(w).$
  We also define $\psi_d(t)$
  to be equal to $\varphi^{-1}(\lfloor\frac{t}{d+1}\rfloor)$ when it
  is defined.
  For instance, $\varphi_5(0110)$ in base $2$ gives $(101+1)\times 10110$ (i.e., $132$ in base 10).
  Reversely, we have $\psi_5(132)=...=\psi_5(137)=0110$, and $\psi_5(138)=...=\psi_5(143)=0111$.

  The presence and latency functions are now
  defined along the lines as those of $\G_2$, the only
  difference being that we are using $\varphi_d$ (resp. $\psi_d$) instead
  of $\varphi$ (resp. $\varphi^{-1}$).
  Thus, for all $u\in\{v_0,v_1,v_2\}$,  $i\in\Sigma$, and $t\geq 0$, we  define 
  \begin{itemize}
  \item
    $\rho((u,v_1,i),t) = 1$ iff $\lfloor\frac{t}{d+1}\rfloor\in\varphi_d(\Sigma^*)$ and $\psi_d(t).i\in L,$
  \item  $\zeta((u,v_1,i),t) = \varphi_d(\psi_d(t).i)-t$
  \item $\rho((u,v_2,i),t) = 1$ iff $\lfloor\frac{t}{d+1}\rfloor\in\varphi_d(\Sigma^*)$ and $\psi_d(t).i\notin L,$
  \item $\zeta((u,v_2,i),t) = \varphi_d(\psi_d^{-1}(t).i)-t$
  \end{itemize}

  By the same induction technique as in Section~\ref{sec:no-waiting},
  we have that $L \subseteq L(\G_{2,d})$. Similarly, we have that any journey labeled by $w$
  ends at time exactly $\varphi_d(w)$, even if some
  $d-$waiting occurred. 

  Finally, we remark that for all words $w,w'\in\Sigma^+$ such that
  $w\neq w'$, 
  we have $|\varphi_d(w)-\varphi_d(w')|>d$. 
  Indeed, if $w\neq w'$ then they differ by at least one letter. The
  minimal time difference is when this is the last letter and these
  last letters are $i,i+1$ w.l.o.g. 
  In this case, $|\varphi_d(w)-\varphi_d(w')|\geq d+1$ by definition of
  $\varphi_d$.
  Therefore waiting for a duration of $d$ does not enable more
  transitions in terms of labeling.
\end{proof}


\end{document}